\definecolor{LightCyan}{rgb}{0.88, 1, 1}
\definecolor{purple}{rgb}{1, .79, 1}
\definecolor{Yellow}{rgb}{1, 1, .6}
\definecolor{Gray}{gray}{0.85}
\definecolor{Almond}{rgb}{0.94, 0.87, 0.8}
\definecolor{antiquebrass}{rgb}{0.8, 0.58, 0.46}
\definecolor{applegreen}{rgb}{0.55, 0.71, 0.0}
\definecolor{babypink}{rgb}{0.96, 0.76, 0.76}
\definecolor{aqua}{rgb}{0.0, 1.0, 1.0}
\newtheorem{theorem}{Theorem}
\begin{document}

\title{Normalized Augmented Inverse Probability Weighting with Neural Network Predictions}

\author{ 
Mehdi Rostami, Olli Saarela

\\ 
Biostatistics, Dalla Lana School of Public Health, University of Toronto, ON, Canada}

\maketitle

\begin{abstract}

The estimation of Average Treatment Effect (ATE) as a causal parameter is carried out in two steps, where in the first step, the treatment and outcome are modeled to incorporate the potential confounders, and in the second step, the predictions are inserted into the ATE estimators such as the Augmented Inverse Probability Weighting (AIPW) estimator. Due to the concerns regarding the nonlinear or unknown relationships between confounders and the treatment and outcome, there has been an interest in applying non-parametric methods such as Machine Learning (ML) algorithms instead. \citet{farrell2018deep} proposed to use two separate Neural Networks (NNs) where there's no regularization on the network's parameters except the Stochastic Gradient Descent (SGD) in the NN's optimization. Our simulations indicate that the AIPW estimator suffers extensively if no regularization is utilized. We propose the normalization of AIPW (referred to as nAIPW) which can be helpful in some scenarios. nAIPW, provably, has the same properties as AIPW, that is, the double-robustness and orthogonality properties \citep{chernozhukov2018double}. Further, if the first step algorithms converge fast enough, under regulatory conditions \citep{chernozhukov2018double}, nAIPW will be asymptotically normal. We also compare the performance of AIPW and nAIPW in terms of the bias and variance when small to moderate $L_1$ regularization is imposed on the NNs.

\end{abstract}


\section{Introduction}

Estimation of causal parameters such as the Average Treatment Effect (ATE) in observational data requires confounder adjustment. The estimation and inference are carried out in two steps: In step 1, the treatment and outcome are predicted by a statistical models or machine learning (ML) algorithm, and in the second step the predictions are inserted into the causal effect estimator. If ML algorithms are employed in step 1, the non-linear relationships can potentially be taken into account. The relationship between the confounders and the treatment and outcome can be non-linear which make the application of Machine Learning (ML) algorithms which are non-parametric models appealing. \citet{farrell2018deep} proposed to use two separate Neural Networks (double NNs or dNN) where there's no regularization on the network's parameters except the Stochastic Gradient Descent (SGD) in the NN's optimization. They derive the generalization bounds and prove that the NNs algorithms are fast enough so that the asymptotic distribution of causal estimators such as the Augmented Inverse Probability Weighting (AIPW) estimator \citep{lunceford2004stratification} will be asymptotically linear, under regulatory conditions and the utilization of cross-fitting \cite{chernozhukov2018double}. 

\citet{farrell2018deep} argue that the fact that SGD-type algorithms control the complexity of the NN algorithm up to some extent \citep{goodfellow2016deep, zhang2021understanding} is sufficient for the first step. Our initial simulations and analyses, however, contradict this claim in scenarios where strong confounders and instrumental variables exist in the data. We argue that for causal parameter estimation, dNN with no regularization leads to high variance for the causal estimator used in the second step. This is also intuitively true as the complexity of NNs can lead to perfect prediction in the treatment model which violates the positivity assumption \citep{diaz2018doubly}. This is particularly true for the AIPW estimator where the propensity score predictions in the denominator can blow up the estimates (and the standard error estimations.) 

We propose and study a quick remedy to this issue by studying the normalization of the AIPW estimator (similar to the normalization of the Inverse Probability Weighting (IPW) \citep{lunceford2004stratification}), here referred to as nAIPW. In fact, both AIPW and nAIPW can be viewed as a more general estimator which is derived via the efficient influence function of ATE \cite{hines2021demystifying}. 

To the best of our knowledge, the performance of nAIPW has not been previously studied in the machine learning context. We will prove that this estimator has the doubly robust \citep{robins1994estimation} and the rate doubly robust \citep{hines2021demystifying} property, and illustrate that it is robust against extreme propensity score values. Further, nAIPW (similar to AIPW), has the orthogonality property \citep{chernozhukov2018double} which means that it
is robust against small variation in the predictions of the outcome and treatment assignment predictions. One theoretical difference is that AIPW is the most efficient estimator among all the double robust estimators of ATE given both treatment and outcome models are correctly specified \citep{scharfstein1999adjusting}. In practice, however, often there is no a priori knowledge about the true outcome and propensity score relationships with the input covariates and thus this feature of AIPW is probably of less practical use. 

We compare AIPW and nAIPW through a simulation study where we allow for moderate to strong confounding and instrumental variable effects, that is, allow for possible violation of the positivity assumption. Further, a comparison between AIPW and nAIPW is made on the Canadian Community Health Survey (CCHS) dataset where the intervention/treatment is the food security vs food insecurity and the outcome is individuals' Body Mass Index (BMI). 

Organization of the article is as follows. In Section \ref{nAIPWintro} we will formally introduce the nAIPW estimator to the readers and state its double robustness property, and in Section \ref{prediction-model} we present the first step prediction model, double neural networks. In section \ref{theory} we will present the theoretical aspects of the paper, including the asymptotic normality, doubly robustness and rate doubly robustness orthogonality of the proposed estimator (nAIPW) and the asymptotic normality. We will present the simulation scenarios and results of comparing the nAIPW estimator with other conventional estimators in Section \ref{simulations}. We apply the estimators on a real dataset in Section \ref{applications}. The article will be concluded with a short discussion on the findings in section \ref{discussion}. The proofs are straightforward but long and thus are included in the Appendix \ref{appendix}.

\section{Normalized Doubly Robust Estimator}\label{nAIPWintro}

Let data $\mathbf{O}=(O_1, O_2, ..., O_n)$ be generated by a data generating process $P$, where $O_i$ is a finite dimensional vector $O_i=(Y_i, A_i, W_i)$, with $\mathbf{W}$ being the adjusting factors. $P$ is the true observed data distribution, $\hat{P}_n$ is the distribution of $\mathbf{O}$ such that its marginal distribution with respect to $W$ is its empirical distribution, and the expectation of the conditional distribution $Y|A=a, W$, for $a=0,1$, can be estimated. We denote the prediction function of observed outcome given explanatory variables in the treated group $Q^1:=Q(1, W)=\mathbb{E}[Y|A=1, W]$, and that in the untreated group $Q^0:=Q(0, W)=\mathbb{E}[Y|A=0, W]$, and the propensity score as $g(W)=\mathbb{E}[A|W]$. Throughout, the expectations $\mathbb{E}$ are with respect to $P$. The symbol $\ \hat{}\ $ on the population-level quantities indicates the corresponding finite sample estimator, and $P$ is replaced by $\hat{P}_n$.

Let the causal parameter of interest be the Average Treatment Effect (ATE)
\begin{equation}\label{causalriskdiff0}
   \beta_{ATE} = \mathbb{E}[Y^1 - Y^0] = \mathbb{E}\big[\mathbb{E}[Y^1 - Y^0|W]\big]=\mathbb{E}\big[\mathbb{E}[Y|A=1,W]\big]-\mathbb{E}\big[\mathbb{E}[Y|A=0,W]\big],
\end{equation}
where $Y^1$ and $Y^0$ are the potential outcomes of the treatment and controls \citep{robins1994estimation}. 

For identifiablity of the parameter, the following assumptions must hold true. The first assumption is the Conditional Independence, or Unconfoundedness stating that, given the confounders, the potential outcomes are independent of the treatment assignments ($Y^0, Y^1 \perp A| W$). The second assumption is Positivity which entails that the assignment of treatment groups is not deterministic ($0 < Pr(A = 1|W) < 1$). The third assumption is Consistency which states that the observed outcomes equal their corresponding potential outcomes ($Y^{A} = y$). There are other modeling assumptions made such as time order (i.e. the covariates $W$ are measured before the treatment), IID subjects, and a linear causal effect.

A list of first candidates to estimate ATE are

\begin{equation} \label{allests}
\begin{split}
        \text{nATE} & \qquad \hat{\beta}_{nATE} = \frac{1}{n_1}\sum_{i\in A_1} \hat{Q}^1_i - \frac{1}{n_0}\sum_{i\in A_0}\hat{Q}^0_i,\\
        \text{SR}& \qquad \hat{\beta}_{SR} = \hat{\mathbb{E}}\Big[\hat{\mathbb{E}}[Y^1 - Y^0|W]\Big] = \frac{1}{n}\sum_{i=1}^n \hat{Q}^1_i - \hat{Q}^0_i,\\ 
        \text{IPW}& \qquad \beta_{IPW} = \hat{\mathbb{E}}\Big[\frac{Y^1}{\hat{\mathbb{E}}[A|W]}-\frac{Y^0}{1-\hat{\mathbb{E}}[A|W]}\Big] = \frac{1}{n}\sum_{i=1}^n\Big(\frac{A_iy_i}{\hat{g}_i}-\frac{(1-A_i)y_i}{1-\hat{g}_i}\Big),\\ 
        \text{nIPW}& \qquad \hat{\beta}_{nIPW} = \sum_{i=1}^n \Big( \frac{A_iw^{(1)}_iy_i}{\sum_{j=1}^n A_jw^{(1)}_j}-\frac{(1-A_i)w^{(0)}_iy_i}{\sum_{j=1}^n (1-A_j)w^{(0)}_j}\Big).
\end{split}
\end{equation}

The naive Average Treatment Effect (nATE) is a biased (due to the selection bias) estimator of ATE \citep{angrist2008mostly} and is poorest estimator among all the candidates. The Single Robust (SR) is not an orthogonal estimator \citep{chernozhukov2018double} and if ML algorithms which do not belong to the Donsker class (\cite{van2000asymptotic}, Section 19.2) or their entropy grows with the sample size are used, this estimator also becomes biased and is not asymptotically normal. The Inverse Probability Weighting (IPW) \citep{horvitz1952generalization} and its normalization versions adjust (or weight) the observations in the treatment and control groups. IPW and nIPW are also not orthogonal estimators and are similar to SR in this respect. In addition, both $\hat{\beta}_{SR}$ and $\hat{\beta}_{IPW}$ (and $\hat{\beta}_{nIPW}$) are single robust, that is, they are consistent estimators of ATE if the models used are $\sqrt{n}$-consistent \citep{lunceford2004stratification}. IPW is an unbiased estimator of ATE if $g$ is correctly specified, but nIPW is not unbiased, but is less sensitive to extreme predictions. The Augmented Inverse Probability Weighting (AIPW) estimator \citep{scharfstein1999adjusting} is an improvement over SR, IPW and nIPW which involves the predictions for both treatment (the propensity score) and the the causal parameter can be expressed as:

\begin{equation}\label{trueAIPW}
   \beta = \mathbb{E}\Bigg[\Big(\frac{AY-Q(1,W)(A-\mathbb{E}[A |W])}{\mathbb{E}[A |W]} \Big)
   -
   \Big(\frac{(1-A)Y+Q(0,W)(A-\mathbb{E}[A |W])}{1-\mathbb{E}[A |W]}\Big)\Bigg],
\end{equation}
and the sample version estimator of \eqref{trueAIPW} is
\begin{multline}\label{AIPW}
    \hat{\beta}_{AIPW} = \frac{1}{n}\sum_{i=1}^n\Bigg[\Big(\frac{A_iY_i-\hat{Q}(1, W_i)(A_i-\hat{\mathbb{E}}[A_i |W_i])}{\hat{\mathbb{E}}[A_i |W_i]} \Big)
   -
   \Big(\frac{(1-A_i)Y_i+\hat{Q}(0, W_i)(A_i-\hat{g}_i)}{1-\hat{\mathbb{E}}[A_i |W_i]}\Big)\Bigg] = \\
   \frac{1}{n}\sum_{i=1}^n \Big(\frac{A_i(y_i-\hat{Q}^1_i)}{\hat{g}_i} - \frac{(1-A_i)(y_i-\hat{Q}^0_i)}{1-\hat{g}_i}\Big) + 
    \frac{1}{n}\sum_{i=1}^n \big(\hat{Q}^1_i-\hat{Q}^0_i\big) = \\
   \frac{1}{n}\sum_{i=1}^n \Big(\frac{A_i(y_i-\hat{Q}^1_i)}{\hat{g}_i} - \frac{(1-A_i)(y_i-\hat{Q}^0_i)}{1-\hat{g}_i}\Big) + \hat{\beta}_{SR},
\end{multline}
where $\hat{Q}^k_i = \hat{Q}(k,W_i)=\hat{\mathbb{E}}[Y_i |A_i=k, W_i]$ and $\hat{g}_i = \hat{\mathbb{E}}[A_i |W_i]$. 

Among all the doubly robust estimators of ATE, AIPW is the most efficient estimator if both of the propensity score or outcome models are correctly specified, but is not necessarily efficient under incorrect model specification. In fact, this nice feature of AIPW may be less relevant in real life problems as we might not have a priori knowledge about the predictors of the propensity score and outcome and we cannot correctly model them. Further, in practice, perfect or near perfect prediction of the treatment assignment can inflate the variance of the AIPW estimator \citep{van2011targeted}. As a remedy, similar to the normalization of IPW estimator, we can define a normalized version of the AIPW estimator which is less sensitive to extreme values of predicted propensity score, referred to as the normalized Augmented Inverse Probability Weighting (nAIPW) estimator:

\begin{equation}\label{nAIPW}
    \hat{\beta}_{nAIPW} = \sum_{i=1}^n \Big(\frac{A_i(y_i-\hat{Q}^1_i)w_i^{(1)}}{\sum_{j=1}^n A_jw_j^{(1)}} - \frac{(1-A_i)(y_i-\hat{Q}^0_i)w_i^{(0)}}{\sum_{j=1}^n (1-A_j)w_j^{(0)}}\Big) + \hat{\beta}_{SR},
\end{equation}
where $w^{(1)}_k = \frac{1}{\hat{g}_k}$ and $w^{(0)}_k = \frac{1}{1-\hat{g}_k}$. Both AIPW and nAIPW estimators add adjustment factors to the SR estimator which involve both models of the treatment and the outcome. 

Both AIPW and nAIPW are examples of a class of estimators where

\begin{equation}\label{nAIPW1}
    \hat{\beta}_{GDR} = \frac{1}{n} \sum_{i=1}^n \Big(\frac{A_i(y_i-\hat{Q}^1_i)}{\hat{h}^1_i} - \frac{(1-A_i)(y_i-\hat{Q}^0_i)}{\hat{h}^0_i}\Big) + \hat{\beta}_{SR},
\end{equation}
where we refer to this general class as the General Doubly Robust estimator (GDR). Letting $\hat{h}^1=\hat{g}$ and $\hat{h}^0=1-\hat{g}$ gives the AIPW estimators and letting $\hat{h}^1=\hat{g}\hat{\mathbb{E}}\frac{A}{\hat{g}}$ and $\hat{h}^0=(1-\hat{g})\hat{\mathbb{E}}\frac{1-A}{1-\hat{g}}$ gives the nAIPW estimator.

The GDR estimator can also be written as
\begin{equation}
\hat{\beta}_{GDR} = \hat{\mathbb{E}} \Big(\big[\frac{A}{\hat{h}^1} - \frac{1-A}{\hat{h}^0}\big]y -
    \big(A-\hat{h}^1\big)\hat{Q}^1 + 
    \big(1-A-\hat{h}^0\big)\hat{Q}^0
    \Big),
\end{equation}

If $h^1$ and $h^0$ are chosen so that 
\begin{equation}
\mathbb{E}\big[A-h^1\big] = 0, \ \mathbb{E}\big[1-A-h^0\big] =0,
\end{equation}
by the total law of expectation $\hat{\beta}_{GDR}$ is an unbiased estimator of $\beta$.

\section{Outcome and Treatment Predictions}\label{prediction-model}

The causal estimation and inference when utilizing the AIPW and nAIPW is carried out in two steps. In step 1, the treatment and outcome are predicted by a statistical or machine learning (ML) algorithm, and in the second step the predictions are inserted into the estimator. The ML algorithms in step 1 can capture the linear and nonlinear relationships between the confounders and the treatment and the outcome.

Neural Networks (NNs) are a class of non-linear and non parametric complex algorithms that can be employed to model the relationship between any set of inputs and some outcome. There has been a tendency to use NNs as they have achieved a great success in the Artificial Intelligence (AI) most complex tasks such as computer vision and natural language understanding \citep{goodfellow2016deep}. 

\citet{farrell2018deep} used two independent NNs for modeling for modeling the propensity score model and the outcome with the Rectified Linear Unit (RELU) activation function \citep{goodfellow2016deep}, here referred to as the double NN or dNN:

\begin{equation} \label{dNNloss}
\begin{split}
        \mathbb{E}[Y|A, W] &= \beta_0 + \beta A + \mathbf{W}\alpha + \mathbf{H}\mathbf{\Gamma}_Y\\
        \mathbb{E}[A|W] &= \beta'_0 + \mathbf{W'}\alpha' + \mathbf{H'}\mathbf{\Gamma}_A,
\end{split}
\end{equation}
where two separate neural nets model $y$ and $A$ (no parameter sharing). \cite{farrell2018deep} proved that dNN algorithms almost attain $n^{\frac{1}{4}}$-rates. By employing the cross-fitting method and theory developed by Chernozhukov et al. \citep{chernozhukov2018double}, an orthogonal causal estimator is asymptotically normal, under some regularity and smoothing conditions. 
if dNN is used in the first step (see Theorem 1 in Farrell et al. paper \citep{farrell2018deep}).

These results assume no regularizations imposed on the NNs' weights, and only the Stochastic Gradient Descent (SGD) is used. Farrell et al. claim that the fact that SGD controls the complexity of the NN algorithm up to some extent \citep{goodfellow2016deep, zhang2021understanding} is sufficient for the first step. Our initial simulations, however, contradict this claim and we hypothesise that for causal parameter estimation, dNN with no regularization leads to high variance for the causal estimator used in the second step. Our initial experiments indicate that $L_2$ regularization and Dropout do not perform well in terms of the Mean Square Error (MSE) of AIPW. The loss functions we use contains $L_1$ regularization (in addition to SGD during the optimization):

\begin{equation} \label{eq1}
\begin{split}
        L_y(\mathcal{P}_y, \beta, \alpha) = & \sum_{i=1}^n \Big[y_i - \alpha' - \beta A_i -\mathbf{W}_i\alpha - H_i^T\mathbf{\Gamma}_Y \Big]^2 + C_{L_1}\sum_{\omega \in \mathcal{P}} |\omega|,\\
        L_A(\mathcal{P}_A, \alpha') = & \sum_{i=1}^n \Big[A_i \log\Big(g\big(H_i^T\mathbf{\Gamma}_A\big)\Big) + (1-A_i) \log\Big(1-g\big(H_i^T\mathbf{\Gamma}_A\big)\Big)\Big] + C'_{L_1}\sum_{\omega \in \mathcal{P}} |\omega|,
\end{split}
\end{equation}
where $C_{L_1}, C'_{L_1}$ are hyperparameters, that can be set before training or be determined by Cross-Validation, that can convey the training to pay more attention to one part of the output layer. The dNN can have an arbitrary number of hidden layers, or the width of the network ($\mathcal{HL}$) is another hyperparameter. For a 3-layer network, $\mathcal{HL}=[l_1,l_2, ...,l_{h}]$, where $l_j$ is the number neurons in layer $j$, $j=1, 2, ..., h$. $\mathcal{P}_y, \mathcal{P}_A$, are the connection parameters in the nonlinear part of the networks, with $\Omega$'s being shared for the two outcome and propensity models. Noted that the Gradient descent-type optimizations in the deep learning platforms (such as pytorch in our case) do not cause the NN parameters shrink to zero.

\section{GDR Estimator Properties}\label{theory}

In this section we will see that nAIPW \eqref{nAIPW} is doubly robust, that is, if either of the outcome or propensity scores models are $\sqrt{n}$-consistent, nAIPW will be consistent. Further, nAIPW is orthogonal \citep{chernozhukov2018double} and is asymptotically linear under certain assumptions and we calculate its asymptotic variance.

\subsection{Consistency and Asymptotic Distribution of nAIPW}

In Causal Inference, estimating the causal parameter and drawing inference on the parameter are two major tasks. Employing machine learning algorithm to estimate $Q$ and $g$ in \eqref{nAIPW} is a means to estimate and draw inference on causal parameter; the ultimate goal is the relationship between the treatment and the outcome. This allows people to use blackbox ML models with no explanation how these models have learned from the explanatory features. The question is if the consistency and asymtotic normality of the second step causal estimator are preserved if complex ML algorithms are utilized twice for the treatment and outcome models each with convergence rate of smaller than $\sqrt{n}$, and entropy that grows with $n$. 

\citet{chernozhukov2016double} provide numerical experiments illustrating that some estimators are not consistent or asymptotically normal if complex ML models are used that do not belong to the Donsker class and their entropy grow with $n$. They further provide a solution by introducing "orthogonal" estimators that, under some regulatory conditions and cross-fitting, are asymptotically normal even if complex ML models can be used as long as their rates of convergence are even as small as $n^{\frac{1}{4}}$. 

The next two subsections provide an overview of the general theory and prove that nAIPW is asymptotically normal.

\subsection{The Efficient Influence Function}

\citet{hines2021demystifying} derives the Efficient Influence Function (EIF) of $\beta=\beta_1-\beta_0$ as
\begin{equation}
    \phi(O, P) = \Big(\frac{A}{g}(Y-Q^1)+Q^1-\beta_1\Big) - \Big(\frac{1-A}{1-g}(Y-Q^0)+Q^0-\beta_0\Big)
\end{equation}


To study the asymptotic behaviour of nAIPW, we write the scaled difference
\begin{equation}\label{taylor}
    \sqrt{n}(\hat{\beta}-\beta) = \frac{1}{\sqrt{n}}\sum_{i=1}^n{\phi(O_i, P)}-\frac{1}{\sqrt{n}}\sum_{i=1}^n{\phi(O_i, \hat{P}_n)}+\sqrt{n}(P_n-P)[\phi(O_i, \hat{P}_n)-\phi(O_i, P)]-\sqrt{n}R(P, \hat{P}_n),
\end{equation}
where the first term is a normal distribution by the Central Limit Theorem, and third and forth terms are controlled if the class of functions are Donsker, and standard smoothing conditions are satisfied (\cite{chernozhukov2018double, van2000asymptotic}, Theorem 19.26). If the nuisance parameters are not Donsker, data splitting and cross fitting guarantees plus the regulatory conditions are needed to control these two terms \cite{chernozhukov2018double, farrell2018deep}. It is unclear, however, that how the second term behaves, i.e., 
\begin{equation}
    -\frac{1}{\sqrt{n}}\phi(O, \hat{P}_n) = -\frac{1}{\sqrt{n}}\sum_{i=1}^n\Big[\frac{A_i}{g_i}(Y_i-\hat{Q}_i^1)-\frac{1-A_i}{1-g_i}(Y_i-\hat{Q}_i^0)+\hat{Q}_i^1-\hat{Q}_i^0\Big]-\hat{\beta},
\end{equation}
where $\hat{\beta}=\beta(\hat{P}_n)$, as it contains data-adaptive nuisance parameters estimations. There are different tricks how to get rid of this term. One method is the one-step method in which we move this term to the left to create a new estimator which is exactly as the AIPW estimator with known propensity scores:

\begin{equation}\label{onestep}
    \sqrt{n}(\hat{\beta}+\frac{1}{n}\phi(O, \hat{P}_n)-\beta) = \sqrt{n}\Big(\frac{1}{n}\sum_{i=1}^n\Big[\frac{A_i}{g_i}(Y_i-\hat{Q}_i^1)-\frac{1-A_i}{1-g_i}(Y_i-\hat{Q}_i^0)+\hat{Q}_i^1-\hat{Q}_i^0\Big]-\beta\Big).
\end{equation}

Another trick is to let this term vanish which result in the estimating equations whose solution is exactly the same as the one-step estimator. The targetted learning strategy is to manipulate the data generating process which results in a different estimator \cite{hines2021demystifying, van2011targeted} (which we do not study here). 

The requirement in the above estimator is that the propensity score is know which is unrealistic. In reality, this quantity should be estimated using the data. However, replacing $g$ with a data-adaptive estimator changes the remainder term in \eqref{taylor} that needs certain assumptions to achieve asymptotic properties such as consistency. We replace $g$ and $1-g$ in \eqref{onestep} by $\hat{h}^1$ and $\hat{h}^0$, respectively, which provides a more general view of the above one-step estimator.

\subsection{Doubly Robustness and  Rate Doubly Robustness Properties of GDR}\label{doublerobustnAIPW}

One of the appealing properties of AIPW is its doubly robust which partially relaxes the restrictions of IPW and SR which require the consistency of the treatment and outcome models, respectively. This property is helpful when the first step algorithms are $\sqrt{n}$-consistent. The following theorem states that the nAIPW estimator \eqref{nAIPW} actually possesses the doubly robustness property.

\begin{theorem}[nAIPW Double Robustness]\label{nAIPWrobustness}
The $DR$ estimator \eqref{nAIPW} is consistent if either $\hat{Q}^k\xrightarrow{p}Q^k$, $k=0, 1$, or $\hat{g}\xrightarrow{p}g$.
\end{theorem}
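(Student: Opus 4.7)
The plan is to analyse the estimator case by case, following the standard double-robustness bookkeeping but taking care that the self-normalized denominators also behave as we expect. I would first rewrite the two normalization denominators in the form $\frac{1}{n}\sum_i A_i/\hat g_i$ and $\frac{1}{n}\sum_i (1-A_i)/(1-\hat g_i)$, so that nAIPW becomes a ratio of sample averages plus $\hat\beta_{SR}$. Then by the LLN applied to each average (assuming the usual positivity bounds $\epsilon < \hat g_i < 1-\epsilon$ eventually, and integrable envelopes, which are the implicit regularity conditions behind the theorem), I can reduce the analysis to computing population-level limits and invoking the continuous mapping theorem / Slutsky to combine them.

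For the first case, assume $\hat g \xrightarrow{p} g$ while $\hat Q^k \xrightarrow{p} q^k$ with $q^k$ possibly wrong. The denominator $\frac{1}{n}\sum_i A_i/\hat g_i$ converges in probability to $\mathbb{E}[A/g]=\mathbb{E}[g/g]=1$ (and symmetrically the other denominator converges to $1$). The numerator $\frac{1}{n}\sum_i A_i(Y_i-\hat Q^1_i)/\hat g_i$ converges to $\mathbb{E}[A(Y-q^1)/g]$; conditioning on $W$ and using $\mathbb{E}[AY|W]=gQ^1$ gives $\mathbb{E}[Q^1-q^1]$. Similarly the second corrective term converges to $\mathbb{E}[Q^0-q^0]$. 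Adding these to $\hat\beta_{SR}\xrightarrow{p}\mathbb{E}[q^1-q^0]$, the $q^k$ terms cancel and the limit is $\mathbb{E}[Q^1-Q^0]=\beta_{ATE}$.

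For the second case, assume $\hat Q^k \xrightarrow{p} Q^k$ while $\hat g \xrightarrow{p} h$ with $h$ possibly wrong. Here $\hat\beta_{SR}$ already converges to $\beta_{ATE}$, so it suffices to show both corrective terms vanish. Writing the first as a ratio, the denominator converges to $\mathbb{E}[g/h]>0$ (so is bounded away from zero), and the numerator converges to $\mathbb{E}[A(Y-Q^1)/h]$. Conditioning on $W$ inside the expectation, $\mathbb{E}[A(Y-Q^1)\mid W]=g\,(Q^1-Q^1)=0$, so the numerator's limit is $0$. The ratio therefore converges to $0$, and the same argument with $1-A$, $1-\hat g$, $Q^0$ handles the second corrective term.

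The main obstacle, and the only place where normalization materially changes the classical AIPW argument, is justifying the ratio limits rigorously: I need the denominators to be uniformly bounded away from zero in probability (positivity of $\hat g$ plus LLN), and I need enough integrability on $Y/\hat g$ and $\hat Q^k/\hat g$ to invoke LLN despite the data-adaptive $\hat g$. Under the standing positivity assumption and the convergence-in-probability hypotheses of the theorem (together with the implicit uniform integrability baked into "$\sqrt n$-consistent'' first-step estimators as used elsewhere in the paper), these conditions hold; one can make the argument fully formal by bounding $\hat g_i\in[\epsilon,1-\epsilon]$ on a high-probability event and applying Slutsky. No Donsker or cross-fitting machinery is required for mere consistency, which is why the proof is substantially lighter than the asymptotic normality argument built around the EIF decomposition in \eqref{taylor}.
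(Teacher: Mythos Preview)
Your proposal is correct and follows the same two-case skeleton as the paper: when $\hat Q^k$ is consistent the $\hat\beta_{SR}$ piece already delivers $\beta$ and the normalized correction terms vanish (via $\mathbb{E}[A(Y-Q^1)\mid W]=0$), and when $\hat g$ is consistent the normalizing denominators converge to $1$ and the pieces recombine to $\beta$.

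The one noteworthy difference is in the $\hat g\xrightarrow{p}g$ case. The paper does not compute the limits of the numerator and denominator directly and then cancel the $q^k$'s against $\hat\beta_{SR}$; instead it algebraically rearranges $\hat\beta_{nAIPW}$ into the form
\[
\underbrace{\frac{1}{n}\sum_i\Bigl(\frac{A_i}{\hat g_i\hat w^1}-\frac{1-A_i}{(1-\hat g_i)\hat w^0}\Bigr)y_i}_{\text{nIPW}}
\;-\;\frac{1}{n}\sum_i\hat Q^1_i\bigl(A_i-\hat g_i\hat w^1\bigr)
\;+\;\frac{1}{n}\sum_i\hat Q^0_i\bigl(1-A_i-(1-\hat g_i)\hat w^0\bigr),
\]
then cites the known consistency of nIPW \citep{lunceford2004stratification} for the first block and uses $\hat w^k\xrightarrow{p}1$ to kill the last two blocks. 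Your route is more self-contained (you do not need nIPW consistency as an external lemma) and makes the cancellation with $\hat\beta_{SR}$ explicit, at the cost of tracking the $q^k$ limits by hand; the paper's route is terser but leans on a cited result. Either way the content is the same, and your remarks about positivity/integrability being the implicit regularity conditions are exactly what the paper leaves unstated.
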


The proof is left to the appendix. 
Theorem \ref{nAIPWrobustness} is useful when we \emph{a peiori} knowledge about the propensity scores (such as in the experimental studies) or we estimate the propensity scores with $\sqrt{n}$-rate converging algorithms. In practice, however, the correct specification is infeasible in the observational data, but $\sqrt{n}$-rate algorithms such as parametric models, Generalized Additive Models (GAM) or the models that assume sparsity might be used \citep{farrell2015robust}. This is restrictive and these model assumptions might not hold in practice which is why non-parametric ML algorithms such as NNs are motivated to be used. As mentioned before, NN we utilize here does not offer a $\sqrt{n}$-consistent prediction model in the first step of the estimation \citep{farrell2018deep}. This reduces the usefulness of the double robustness property of GDR estimator when using complex ML algorithms. A more useful property when using complex ML algorithms is the \emph{rate double robustness (RDR)} property \cite{smucler2019unifying}. RDR does not require either of the prediction models to be $\sqrt{n}$-consistent; it suffices that they are consistent at any rate but together become $\sqrt{n}$-consistent; that is, if the propensity score and outcome model are consistent at $n^{r_A}$ and $n^{r_Y}$, respectively ($r_Y, r_A<0$), we must have $r_A+r_Y=\frac{1}{2}$. To see that the DR has this property (as does DR \citep{farrell2015robust}), note that the remainder \eqref{taylor} can be written as

\begin{equation}\label{remainder}
    -\sqrt{n}R(P, \hat{P}_n) = \sqrt{n}\mathbb{E}\Big[\big(\frac{g}{\hat{h}^1}-1\big)\big(Q^1-\hat{Q}^1\big)\Big]+\sqrt{n}\mathbb{E}\Big[\big(\frac{1-g}{\hat{h}^0}-1\big)\big(Q^0-\hat{Q}^0\big)\Big],
\end{equation}
which, by the H\"older inequality, is upper bounded:
\begin{equation}\label{bounded}
    -\sqrt{n}R(P, \hat{P}_n)\leq \Bigg[\mathbb{E}\Big[\frac{g}{\hat{h}^1}-1\Big]^2\Bigg]^{\frac{1}{2}}\Bigg[\mathbb{E}\Big[Q^1-\hat{Q}^1\Big]^2\Bigg]^{\frac{1}{2}}+\Bigg[\mathbb{E}\Big[\frac{1-g}{\hat{h}^0}-1\Big]^2\Bigg]^{\frac{1}{2}}\Bigg[\mathbb{E}\Big[Q^0-\hat{Q}^0\Big]^2\Bigg]^{\frac{1}{2}}
\end{equation}

Making the standard assumptions that
\begin{equation}\label{assumptions}
\begin{split}
    &\Bigg[\mathbb{E}\Big[g-\hat{h}^k\Big]^2\Bigg]^{\frac{1}{2}}\Bigg[\mathbb{E}\Big[Q^k-\hat{Q}^k\Big]^2\Bigg]^{\frac{1}{2}} = o(n^{-\frac{1}{2}}),\quad k=0,1,\\
    &\mathbb{E}\Big[g-\hat{h}^k\Big]^2 = \ o(1), \quad
    \mathbb{E}\Big[Q^k-\hat{Q}^k\Big]^2 = \ o(1), \quad k=0, 1,\\
    &\text{Empirical Positivity} \quad  c_1< \hat{h}^k < 1- c_2, \  \text{for some}\ c_1, c_2>0,
\end{split}
\end{equation}
implies
\begin{equation}
    -\sqrt{n}R(P, \hat{P}_n) = o(n^{-\frac{1}{2}}),
\end{equation}
that is the GDR has the rate double robustness property.

The assumptions in \eqref{assumptions} are less restrictive than needing at least one of the predictions models to be $\sqrt{n}$-consistent for the double robust property \citep{hines2021demystifying, farrell2015robust}. This means that the outcome and propensity score models can be at least as fast as $o(n^{-\frac{1}{4}})$ (which is an attainable generalization bound for many complex machine learning algorithms \citep{chernozhukov2018double}), and still the GDR estimator is consistent. \citet{farrell2018deep} proves that two neural networks without regularization (except the one imposed by the stochastic gradient descent optimization) satisfy such bounds and can provide a convenient first-step prediction algorithms (while they utilize AIPW estimator and the cross-fitting strategy proposed by \citet{chernozhukov2018double}).

In order for a special case of GDR estimator to outperform the AIPW estimator, we must have $Ah^1 \geq Ag$ and $(1-A)h^0 \geq (1-A)(1-g)$, in addition to conditions in \eqref{assumptions}. Noted that these two conditions are satisfied for nAIPW; replacing $h^1$ and $h^0$ with $\hat{g}\hat{\mathbb{E}}\frac{A}{\hat{g}}$ and $(1-\hat{g})\hat{\mathbb{E}}\frac{1-A}{1-\hat{g}}$ can help stabilize the bias and variance magnitude and help shrink the remainder \eqref{remainder} to zero. The scenario analysis performed in Subsection \ref{scenarioanalysis} provides an insight about the reduction to the sensitivity to the violation of empirical positivity assumption.

\subsection{Robustness of nAIPW against Extreme Propensity scores}\label{scenarioanalysis}

There are two scenarios that the positivity is violated, where the probability of receiving the treatment for the people who are treated is 1, that is, $A_k=1$ and $P(A_k=1|W)=1$ (or vise versa for the untreated group $A_k=0$ and $P(A_k=0|W)=0$), and where there are a handful of treated subjects whose probability of receiving the treatment is 0, that is, $A_k=1$ and $P(A_k=1|W)=0$ (and vise versa for the untreated group, that is, $A_k=0$ and $P(A_k=0|W)=1$). Although the identifiability assumptions guarantee that such scenarios do not occur, in practice extremely small or large probabilities similar to the second scenario above, that is where there exists a treated individual who has a near zero probability of receiving the treatment, can impact the performance of the estimators that involve propensity score weighting. For example, replacing $h^1$ with $\hat{g}$ and $h^0$ with $1-\hat{g}$ in practice can increase both bias and variance of AIPW \citep{van2011targeted}. This can be seen by veiwing the bias and variance of these weighting terms. As noted before, the AIPW and nAIPW add adjustments to the Single robust estimator $\mathbb{E}Q^1-Q^0$. The adjustments involve weightings $\frac{A}{g}$ or $\frac{A}{g\mathbb{E}\frac{A}{g}}$ to the residuals of $Y$ and $Q^k$, $k=0, 1$. Under a correct specification of the propensity score $g$, these weights have the same expectations. The difference is in their variances:
\begin{equation}
    \begin{split}
        &Var(\frac{A}{g})=\frac{1}{g}-1,\\
        &Var \Big(\frac{A}{g\mathbb{E}\frac{A}{g}}\Big)=\frac{1}{\mathbb{E}^2\frac{A}{g}}(\frac{1}{g}-1),
    \end{split}
\end{equation}
under the correct specification of the propensity score $g$. By letting $g$ tend to zero in near violation of positivity assumption, it can be seen that the nAIPW is less volatile than AIPW estimator. That is, the weights in AIPW might have large variance that those in nAIPW.

A scenario analysis is performed to see how nAIPW stabilizes the estimator: Assume that the empirical positivity is near violated, that is there is at least an observation $k$ where $A_k=1$ where $\hat{g}_k$ is extremely close to zero, say $\hat{g}_k=10^{-s}$ for $s\gg 0$. AIPW will blow up in this case:
\begin{equation}\label{scen11}
\begin{split}
    \beta_{1, AIPW} &= \frac{1}{n}\Big(10^s(Y^1_k-Q^1_k)+\sum_{i\in I^1_{-k} }\frac{Y^1_i-Q^1_i}{g_i}\Big)+\frac{1}{n}\sum_{i=1}^nQ^1_i,\\
    \beta_{0, AIPW} &= \frac{1}{n}\Big(\sum_{i\in I^0 }\frac{Y^0_i-Q^0_i}{1-g_i}\Big)+\frac{1}{n}\sum_{i=1}^nQ^0_i,\\
\end{split}
\end{equation}
where $I^a=\{j:\ A_j=a\}$, $I^a_{-k} = \{j:\ A_j=a\}$, and subscripts $a=1$ and $a=0$ refer to the  estimators of the first and the second components in ATE \eqref{causalriskdiff0}. However, nAIPW is robust against this positivity violation: 
\begin{equation}\label{scen1-1}
    \beta_{1, nAIPW} = \Big(\frac{Y^1_k-Q^1_k}{10^{-s}(10^{s}+\sum_{j\neq k}{\frac{A_j}{g_j}})}+\sum_{i\in I^1_{-k} }\frac{Y^1_i-Q^1_i}{g_i(10^{s}+\sum_{j\neq k}{\frac{A_j}{g_j}})}\Big)+\frac{1}{n}\sum_{i=1}^nQ^1_i,
\end{equation}
and
\begin{equation}
    \beta_{0, nAIPW} = \Big(\frac{0\times(Y^1_k-Q^0_k)}{\star}+\sum_{i\in I^0_{-k} }\frac{Y^0_i-Q^0_i}{(1-g_i)(\sum_{j=1}^n{\frac{1-A_j}{1-g_j}})}\Big)+\frac{1}{n}\sum_{i=1}^nQ^0_i.
\end{equation}

Thus
\begin{equation}\label{scen11-}
    \beta_{1, nAIPW} \approx \Big(\frac{Y^1_k-Q^1_k}{1+10^{-s}(n-1)}+\sum_{i\in I^1_{-k} }\frac{Y^1_i-Q^1_i}{g_i10^{s}+g_i(n-1)}\Big)+\frac{1}{n}\sum_{i=1}^nQ^1_i ,
\end{equation}

The factor $10^s$ in \eqref{scen11} can blow up the AIPW if $10^s \gg n$ (and the outcome estimation is not enough close to the observer outcome), but this factor does not appear in the numerator of nAIPW estimator. For such large factors, \eqref{scen11-} can be simplified to
\begin{equation}\label{scen11---}
    \beta_{1, nAIPW} \approx Y^1_k-Q^1_k+\frac{1}{n}\sum_{i=1}^nQ^1_i.
\end{equation}
Thus the extreme probability does not make $\beta_{1, nAIPW}$ blow up, but the adjustment to the $\beta_{1, SR}$ that accounts for confounding effects. The second factor $\beta_{0, nAIPW}$ is not impacted in this scenario.

Considering a scenario that there is another treated individual with extremely small probability, say $\hat{g}_l=10^{-t}$ such that, without loss of generality, $t>s\gg0$, we will have:
\begin{equation}\label{scen12}
    \beta_{1, nAIPW} \approx \frac{Y^1_k-Q^1_k}{1+10^{t-s}+10^{-s}(n-2)}+\frac{Y^1_l-Q^1_l}{1+10^{s-t}+10^{-t}(n-2)}+\frac{1}{n}\sum_{i=1}^nQ^1_i.
\end{equation}
Depending on the values $s$ and $t$ one of the first two terms in \eqref{scen12} might vanish, but the estimator does not blow up. There is at most only a handful of treated individuals with extremely small probabilities and based on the above observation, the nAIPW estimator does not blow up. That said, nAIPW might not sufficiently correct the $\beta_{SR}$ for the confounding effects, although, confounders have been taken into account in the calculation of $\beta_{SR}$ up to some extent. 


The same observation can be made in the asymptotic variance of these estimators. This shows how extremely small probabilities for treated individuals (or extremely large probabilities for un-treated individuals) can result in a biased and unstable estimator, while neither of the bias or variance of nAIPW suffer as much. Although not performed, the same observation can be made for the untreated individuals with extremely large probabilities. 

As the above scenario analysis indicates the bias and variance of nAIPW might go up in cases of near violation of positivity, but it still is less biased and more stable than AIPW. The remainder term \eqref{remainder} is also more likely to be $o(n^{-\frac{1}{2}})$ in nAIPW versus AIPW because for $k$'s that $A_k=1$, $g_k\mathbb{E}_n\frac{A_k}{g_k}\geq g_k$.

\subsection{Asymptotic Sampling Distribution of nAIPW}

Replacing $g$ in the denominator of the von Mises expansion \eqref{taylor} with the normalizing terms is enough to achieve the asymptotic distribution of the nAIPW and its asymptotic standard error. However, we can see that nAIPW is also the solution to an (extended) estimating equations. Solution to the estimating equations is important as Van der Vaart (Chapters 19 and 25) proves that under certain regulatory conditions, if the prediction models belong to the Donsker class, the solution to Z-estimators are consistent and asymptotically normal (\cite{van2000asymptotic}, Theorem 19.26). Thus, nAIPW that is the solution to a Z-estimator (also referred to an M-estimator) will inherit the consistency and asymptotically normality, assuming certain regulatory conditions and that the first-step prediction models belong to the Donsker class:
\begin{equation} \label{est-eqs}
\begin{split}
        \mathbb{E} \Bigg[\frac{A(Y^1 - Q^1)}{\gamma g} - \frac{(1-A)(Y^0 - Q^0)}{\lambda(1-g)} + (Q^1 - Q^0 - \beta)\Bigg] &= 0,\\
        \mathbb{E} \Big[\frac{A}{g} - \gamma\Big]&= 0,\\
        \mathbb{E} \Big[\frac{1-A}{1-g} - \lambda\Big] &= 0.
\end{split}
\end{equation}

The Donsker class assumption prevents too complex algorithms in the first step, algorithms such as Tree-based models, NNs, Cross-hybrid algorithms or their aggregations \citep{hines2021demystifying, friedman2001elements}. The Donsker class assumption can be relaxed if sample splitting (or cross-fitting) is utilized and the target parameter is orthogonal \citep{chernozhukov2018double}. In the next section we see that nAIPW is orthogonal, and thus, theoretically, we can relax the Donsker class assumption under certain smoothing regulatory conditions. Before seeing the orthogonality property of nAIPW, let us review the smoothing regularity conditions necessary for asymptotic normality. Let $\beta$ be the causal parameter, $\eta \in T$ be the infinite dimensional nuisance parameters where $T$ is a convex set with a norm. Also let the score function $\phi: \mathbb{O} \times \mathcal{B} \times T \rightarrow \mathbb{R}$ be a measurable function, and $\mathbb{O}$ be the measurable space of all random variables $O$ with probability distribution $P \in \mathcal{P}_n$ and $\mathcal{B}$ is an open subset of $\mathbb{R}$ containing the true causal parameter. Let the sample $O=(O_1, O_2, ..., O_n)$ be observed and the set of probability measures $\mathcal{P}_n$ can expand with sample size $n$. Also let $\beta \in \mathcal{B}$ is the solution to the estimating equation $\mathbb{E}\phi(\mathbb{O}, \beta, \eta)=0$. The assumptions that guarantee the second-step orthogonal estimator $\hat{\beta}$ be asymptotically normal are \cite{chernozhukov2018double}: 1) $\beta$ does not fall on the boundary of $\mathcal{B}$; 2) The map $(\beta, \eta) \rightarrow \mathbb{E}_P \phi\big(\mathbf{O}, \beta, \eta\big)$ is twice Gateauax differentiable (this holds bythe positivity assumption).
$\beta$ is identifiable; 3) $\mathbb{E}_P \phi\big(\mathbf{O}, \beta, \eta\big)$ is smooth enough; 4) $\hat{\eta}\in \mathcal{T}$ with high probability and $\eta\in \mathcal{T}$.
$\hat{\eta}$ converges to $\eta_0$ at least as fast as $n^{-\frac{1}{4}}$ (similar but slightly stronger than first two assumptions in \eqref{assumptions}); 5) Score function(s) $\phi(., \beta, \eta)$ has finite second moment for all $\beta \in \mathcal{B}$ and all nuisance parameters $\eta \in \mathcal{T}$; 6) The score function(s) $\phi(., \beta, \eta)$ is measurable; 7) The number of folds increases by sample size.

\subsection{Orthogonality and the Regulatory Conditions}\label{orthogonality}

The Orthogonality condition \citep{chernozhukov2018double} is a property related to the estimating equations
\begin{equation}\label{esteq}
   \mathbb{E} \phi(\mathbf{O}, \beta, \eta) = 0.
\end{equation}
We refer to an estimator drawn from the estimating equations \eqref{esteq} as an orthogonal estimator.

Let $\eta \in T$, where $T$ be a convex set with a norm. Also let the score functions $\phi: \mathbb{O} \times \mathcal{B} \times T \rightarrow \mathbb{R}$ be a measurable function\footnote{For a higher but finite dimensional causal parameter, the score function is a vector of measurable functions.}, and $\mathbb{O}$ is measurable space of all random variables $O$ with probability distribution $P \in \mathcal{P}_n$ and $\mathcal{B}$ is an open subset of $\mathbb{R}$ containing the true causal parameter. Let the sample $O=(O_1, O_2, ..., O_n)$ be observed and the set of probability measures $\mathcal{P}_n$ can expand with sample size $n$. 
The score function $\phi$ follows the Neyman Orthogonality condition with respect to $\mathcal{T} \subseteq T$, if the Gateauax derivative operator exists for all $\epsilon \in [0, 1)$:

\begin{equation}\label{orthcond}
   \partial_{\tilde{\eta}} \mathbb{E}_P \phi\big(\mathbf{O}, \beta_0, \tilde{\eta}\big)\Big|_{\tilde{\eta}=\eta} [\tilde{\eta}-\eta] := \partial_{\epsilon} \mathbb{E}_P \phi\big(\mathbf{O}, \beta_0, \eta+\epsilon(\tilde{\eta}-\eta)\big)\Big|_{\epsilon=0} = 0.
\end{equation}

\cite{chernozhukov2016double} present a few examples of orthogonal estimating equations including AIPW estimator \eqref{AIPW}. 

Utilizing cross-fitting, under standard regulatory conditions, the asymptotic normality of estimators with orthogonal estimating equations is guaranteed even if the nuisance parameters are estimated by ML algorithms not belonging to the Donsker class and without finite entropy conditions \citep{chernozhukov2016double}. The regulatory conditions to be satisfied are 1) $\beta$ does not fall on the boundary of $\mathcal{B}$; 2) The map $(\beta, \eta) \rightarrow \mathbb{E}_P \phi\big(\mathbf{O}, \beta, \eta\big)$ is twice Gateauax differentiable.
$\beta$ is identifiable; 3) $\mathbb{E}_P \phi\big(\mathbf{O}, \beta, \eta\big)$ is smooth enough; 4) $\hat{\eta}\in \mathcal{T}$ with high probability and $\eta\in \mathcal{T}$.
$\hat{\eta}$ converges to $\eta_0$ at least as fast as $n^{-\frac{1}{4}}$; 5) Score function(s) $\phi(., \beta, \eta)$ has finite second moment for all $\beta \in \mathcal{B}$ and all nuisance parameters $\eta \in \mathcal{T}$; 6) The score function(s) $\phi(., \beta, \eta)$ is measurable; 7) The number of folds increases by sample size.

Replacing $\lambda$ and $\gamma$ in the first line of \eqref{est-eqs} with their solutions in the second and third equations:

\begin{equation} \label{est-eq}
        \mathbb{E}_P \phi\big(\mathbf{O}, \beta, Q^1, Q^0, g\big) = \mathbb{E} \Bigg[\frac{A(Y^1 - Q^1)}{g\mathbb{E}\frac{A}{g}} - \frac{(1-A)(Y^0 - Q^0)}{(1-g)\mathbb{E}\frac{1-A}{1-g}} + (Q^1 - Q^0 - \beta)\Bigg] = 0,
\end{equation}

Implementing the orthogonality condition \eqref{orthcond}, it can be verified that nAIPW \eqref{nAIPW} is also an example of orthogonal estimators. To see this, we apply the definition of orthogonality \citep{chernozhukov2018double}:
\begin{multline}\label{nAIPWorthog}
   \partial_{\eta} \mathbb{E}_P \phi\big(\mathbf{O}, \beta, \eta\big)\Big|_{\eta=\eta_0} [\eta-\eta_0] = \partial_{\eta} \mathbb{E}_P\Big(Q^1 + \frac{A(Y^1-Q^1)}{g\mathbb{E}\frac{A}{g}} - Q^0 - 
   \frac{(1-A)(Y^0-Q^0)}{(1-g)\mathbb{E}\frac{1-A}{1-g}}-\beta\Big)|_{\eta=\eta_0} [\eta-\eta_0]\propto\\ \partial_{\epsilon} \mathbb{E}_P\Big(Q^1_{\epsilon} + \frac{A(Y^1-Q_{\epsilon}^1)}{g_{\epsilon}\mathbb{E}\frac{A}{g_{\epsilon}}} - Q^0_{\epsilon} - 
   \frac{(1-A)(Y^0-Q^0_{\epsilon})}{(1-g_{\epsilon})\mathbb{E}\frac{1-A}{1-g_{\epsilon}}}-\beta\Big)|_{\epsilon=0} =\\ \mathbb{E}\Big((\tilde{Q}^1-Q^1)+\frac{A}{g\mathbb{E}\frac{A}{g}}\big(-(\tilde{Q}^1-Q^1)\big)+A(Y-Q^1)a(g,\tilde{g}-g)\Big)-\\
   \mathbb{E}\Big((\tilde{Q}^0-Q^0)+\frac{1-A}{(1-g)\mathbb{E}\frac{1-A}{1-g}}\big(-(\tilde{Q}^0-Q^0)\big)+(1-A)(Y-Q^0)b(g,\tilde{g}-g)\Big)=0, 
\end{multline}
where $Q^k_{\epsilon}=\epsilon \tilde{Q}^k+ (1-\epsilon)Q^k$, $k=0,1$, and $g_{\epsilon}=\epsilon \tilde{g} + (1-\epsilon)g$, and for some functions $a$, and $b$. The last equality is because $\mathbb{E}A(Y-Q^1)=0$, $\mathbb{E}(1-A)(Y-Q^0)=0$, $\mathbb{E}\frac{A}{g\mathbb{E}\frac{A}{g}}=1$ and $\mathbb{E}\frac{1-A}{(1-g)\mathbb{E}\frac{1-A}{1-g}}=1$, under correct specification of the propensity score $g$.

Thus, nAIPW is orthogonal, and by utilizing cross-fitting for the estimation nAIPW is consistent and asymptotically normal, under certain regulatory conditions.

\subsection{Asymptotic Variance of nAIPW}\label{asymptnAIPW}

To evaluate the asymptotic variance of nAIPW, we employ the M-estimation theory \citep{van2000asymptotic, stefanski2002calculus}. For Causal Inference for M-estimators, the bootstrap for the estimation of causal estimator variance is not generally valid even if the nuisance parameter estimators are $\sqrt{n}$-convergent. However, subsampling $m$ out of $n$ observations \citep{politis1994stationary} can be shown to be universally valid, provided $m \rightarrow \infty$ and $\frac{m}{n} \rightarrow 0$. In practice, however, we can face computational issues since nuisance parameters must be separately estimated (possibly with ML models) for each subsample/bootstrap sample.

The variance estimator of AIPW \eqref{AIPW} is \citep{lunceford2004stratification}
\begin{multline}\label{varAIPW}
    \hat{\sigma}^2_{AIPW} = \frac{1}{n^2}\sum_{i=1}^n \Big(\frac{A_iY_i-\hat{Q}^1_i(A_i-\hat{g}_i)}{\hat{g}_i} -    \frac{(1-A_i)Y_i+\hat{Q}^0_i(A_i-\hat{g}_i)}{1-\hat{g}_i} - \hat{\beta}_{AIPW}\Big)^2=\\
    \frac{1}{n^2}\sum_{i=1}^n \Big(\frac{A_i(y_i-\hat{Q}^1_i)}{\hat{g}_i}-\frac{(1-A_i)(y_i-\hat{Q}^0_i)}{1-\hat{g}_i} + \hat{\beta}_{SR} - \hat{\beta}_{AIPW} \Big)^2.
\end{multline}

The below theorem states that the variance estimator of AIPW \eqref{varAIPW} can intuitively extend to calculate the variance estimator of nAIPW \eqref{nAIPW} by moving the denominator $n^2$ to the square term in the summation and replace it with $\hat{g}\hat{\mathbb{E}}\big(\frac{A}{\hat{g}}\big)$ or $(1-\hat{g})\hat{\mathbb{E}}\big(\frac{1-A}{1-\hat{g}}\big)$ in the terms containing $g$ and $1-g$ in the denominator, respectively.

\begin{theorem}\label{nAIPWvartheorem}

The asymptotic variance of the nAIPW \eqref{nAIPW} is

\begin{equation}\label{nAIPWvar}
    \hat{\sigma}^2_{nAIPW} = \sum_{i=1}^n \Big(
    \frac{A_i(y_i-\hat{Q}^1_i)w_i^{(1)}}{\sum_{j=1}^n A_jw_j^{(1)}} -    \frac{(1-A_i)(y_i-\hat{Q}^0_i)w_i^{(0)}}{\sum_{j=1}^n (1-A_j)w_j^{(0)}} + \frac{1}{n}\big(\hat{\beta}_{SR} - \hat{\beta}_{nAIPW}\big) \Big)^2,
\end{equation}
where $\hat{Q}^k_i = \hat{Q}(k,W_i)$ and $\hat{g}_i = \hat{\mathbb{E}}[A_i |W_i]$. 
\end{theorem}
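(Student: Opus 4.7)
My plan is to treat $\hat{\beta}_{nAIPW}$ as the $\beta$-component of the joint M-estimator that solves the stacked estimating equations \eqref{est-eqs} in $\theta=(\beta,\gamma,\lambda)$, and to apply the classical sandwich variance formula. First I would stack the per-observation score as
$$\psi(O;\beta,\gamma,\lambda)=\begin{pmatrix}\dfrac{A(Y-Q^1)}{\gamma g}-\dfrac{(1-A)(Y-Q^0)}{\lambda(1-g)}+Q^1-Q^0-\beta\\ A/g-\gamma\\ (1-A)/(1-g)-\lambda\end{pmatrix},$$
and compute the bread matrix $M=-\mathbb{E}[\nabla_\theta\psi]$. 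The key observation is that, under correct specification, the total law of expectation yields $\mathbb{E}[A(Y-Q^1)/g]=0$ and $\mathbb{E}[(1-A)(Y-Q^0)/(1-g)]=0$, so the cross-derivatives $\partial_\gamma\psi_1$ and $\partial_\lambda\psi_1$ vanish in expectation. Hence $M$ reduces to the identity, the sandwich $M^{-1}\mathbb{E}[\psi\psi^\top]M^{-\top}$ collapses to $\mathbb{E}[\psi\psi^\top]$, and its $(1,1)$ entry gives $\mathrm{Var}(\hat{\beta}_{nAIPW})\approx\mathbb{E}[\psi_1^2]/n$.

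Next, I would translate this into the finite-sample estimator by substituting $(\hat{\beta}_{nAIPW},\hat{\gamma},\hat{\lambda},\hat{Q}^k,\hat{g})$ for their population counterparts, producing
$$\hat{\sigma}^2_{nAIPW}=\frac{1}{n^2}\sum_{i=1}^n\Bigg(\frac{A_i(Y_i-\hat{Q}^1_i)}{\hat{\gamma}\hat{g}_i}-\frac{(1-A_i)(Y_i-\hat{Q}^0_i)}{\hat{\lambda}(1-\hat{g}_i)}+\hat{Q}^1_i-\hat{Q}^0_i-\hat{\beta}_{nAIPW}\Bigg)^2,$$
and then distribute the factor $1/n$ inside each square. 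The identities $n\hat{\gamma}=\sum_j A_jw_j^{(1)}$ and $n\hat{\lambda}=\sum_j(1-A_j)w_j^{(0)}$ convert the two weighted residuals into $\tfrac{A_i(Y_i-\hat{Q}^1_i)w_i^{(1)}}{\sum_j A_jw_j^{(1)}}$ and its control analogue, matching the first two summands of \eqref{nAIPWvar}. Following the AIPW convention exhibited in \eqref{varAIPW}, the remaining individual-level residual $\tfrac{1}{n}(\hat{Q}^1_i-\hat{Q}^0_i-\hat{\beta}_{nAIPW})$ is replaced by its pooled analogue $\tfrac{1}{n}(\hat{\beta}_{SR}-\hat{\beta}_{nAIPW})$ (the two differ only by the centered fluctuation of $\hat{Q}^1_i-\hat{Q}^0_i$ about $\hat{\beta}_{SR}$, which is asymptotically negligible in the sum of squares), producing the stated expression.

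The main obstacle is verifying that the bread matrix is effectively block-diagonal so that the asymptotic variance of $\hat{\beta}_{nAIPW}$ may be read off as the $(1,1)$ entry of $\mathbb{E}[\psi\psi^\top]/n$ rather than requiring full inversion of a nontrivial $M$. This step reduces to the same orthogonality-style identities used in \eqref{nAIPWorthog}, which are in turn guaranteed by correct specification and the total law of expectation. Once these are in hand, the remaining ingredients --- finite second moments of $\psi$, twice Gateaux differentiability of the map $(\beta,\eta)\mapsto\mathbb{E}_P\phi(\mathbf{O},\beta,\eta)$, and the empirical positivity bounds in \eqref{assumptions} --- are already assumed in Subsection \ref{doublerobustnAIPW}, and the classical M-estimator theory \citep{van2000asymptotic, stefanski2002calculus} then delivers consistency of the sandwich variance estimator together with the asymptotic normality of $\sqrt{n}(\hat{\beta}_{nAIPW}-\beta)$.
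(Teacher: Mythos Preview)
Your proposal follows essentially the same M-estimation sandwich strategy as the paper: stack the three estimating equations \eqref{est-eqs}, compute bread and meat, and read the asymptotic variance of $\hat{\beta}_{nAIPW}$ off the $(1,1)$ entry. The one difference in execution is that the paper first multiplies the $\beta$-equation through by $\gamma\lambda$ before differentiating; this reintroduces nonzero off-diagonal bread entries of the form $\mathbb{E}[uh-\tfrac{\lambda}{n}(q-\beta)]$ and $-\mathbb{E}[vf+\tfrac{\gamma}{n}(q-\beta)]$, so the paper then has to carry an explicit remainder $\epsilon$ in the $(1,1)$ entry of the sandwich and argue separately that it is negligible. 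Your choice to keep the score in its original $1/\gamma,\,1/\lambda$ form makes the bread matrix equal to the identity directly (under correct specification, via $\mathbb{E}[A(Y-Q^1)/g]=0$ and its control analogue), which is cleaner and avoids that extra bookkeeping. You also make explicit the replacement of the per-observation term $\tfrac{1}{n}(\hat{Q}^1_i-\hat{Q}^0_i)$ by its pooled average $\tfrac{1}{n}\hat{\beta}_{SR}$ in the final display, a step the paper's appendix simply asserts (``which is the same as \eqref{nAIPWvar1}'') without comment.
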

The proof utilizing the estimating equation techbique is straightforward and is left to the Appendix \ref{appendix}. The same result can be seen when deriving the estimator in the one-step method (see \eqref{taylor} and \eqref{onestep}). The above theorem states that the variance estimator of AIPW \eqref{varAIPW} can intuitively extend to calculate the variance estimator of nAIPW \eqref{nAIPW} by moving the denominator $n^2$ to the square term in the summation and replace it with $\hat{g}\hat{\mathbb{E}}\big(\frac{A}{\hat{g}}\big)$ or $(1-\hat{g})\hat{\mathbb{E}}\big(\frac{1-A}{1-\hat{g}}\big)$ in the terms containing $g$ and $1-g$ in the denominator, respectively. This is intuitive because, by the Law of Total Probability, $\mathbb{E}$ the first two terms is $n$. 

\section{Monte Carlo Experiments}

A Monte Carlo simulation study (with 100 iterations) was performed to compare AIPW and nAIPW estimators, where the dNN is used for the first step prediction. There are a total of 2 case scenarios according to the size of the data. We fixed the sample sizes to be  $n = 750$ and $n = 7500$ , with the number of covariates $p = 32$ and $p = 300$, respectively. The predictors include four types of covariates: the confounders, $X_c $, instrumental variables, $X_{iv}$, the outcome predictors, $X_y$ and the noise or irrelevant covariates $X_{irr}$. Their sizes for the scenarios are $\#X_c  = \#X_{iv} = \#X_y = \#X_{irr} = 8, 75$ and independent from each other were drawn from the Multivariate Normal (MVN) Distribution as $X \sim \mathcal{N}(\mathbf{0}, \Sigma)$, with $\Sigma_{kj} = \rho^{j-k}$ and $\rho=0.5$. The models to generate the treatment assignment and outcome were specified as
\begin{equation} \label{generatedata}
\begin{split}
    A &\sim Ber(\frac{1}{1+e^{-\eta}}), \text{with} \ \eta = f_a(X_c)\gamma_c + g_a(X_{iv})\gamma_{iv},\\
    y &= 3 + A + f_y(X_c)\gamma'_c + g_y(X_y)\gamma_y + \epsilon,
\end{split}
\end{equation}

and $\beta = 1$. The functions $f_a, g_a, f_y, g_y$ select 20\% of the columns and apply interactions and non-linear functions listed below \eqref{nonlinearfs}. The strength of instrumental variable and confounding effects were chosen as $\gamma_c, \gamma_c', \gamma_y \sim Unif(r_1, r_2)$ where  $(r_1=r_2=0.25)$, and $\gamma_{iv} \sim Unif(r_3, r_4)$ where $(r_3=r_4=0.25)$.

The non-linearities are randomly selected among the following functions:
\begin{equation} \label{nonlinearfs1}
\begin{split}
    & l(x_1, x_2) = e^{\frac{x_1x_2}{2}}\\
    & l(x_1, x_2) = \frac{x_1}{1+e^{x_2}}\\
    & l(x_1, x_2) = \big(\frac{x_1x_2}{10}+2\big)^3\\
    & l(x_1, x_2) = \big(x_1+x_2+3\big)^2\\
    & l(x_1, x_2) = g(x_1) \times h(x_2)
\end{split}
\end{equation}
where $g(x) = -2 I(x \leq -1) - I(-1 \leq x \leq 0) + I(0 \leq x \leq 2)+ 3 I(x \geq 2)$, and $h(x) = -5 I(x \leq 0) - 2 I(0 \leq x \leq 1) + 3 I(x \geq 1)$, or $g(x) = I(x \geq 0)$, and $h(x) = I(x \geq 1)$.

The networks' activation function is Rectified Linear Unit (ReLU), with 3 hidden layers as large as the input size (p), with $L_1$ regularization and batch size equal to $3*p$ and 200 epochs. The Adaptive Moment Estimation (Adam) optimizer \citep{kingma2014adam} with learning rate 0.01 and momentum 0.95 were used to estimate the network's parameters, including the causal parameter (ATE).

\section{Simulation Results}\label{simulations}

The oracle estimations are plotted in all the graphs to compare the real-life situations with the truth. In almost all the scenarios we cannot obtain perfect causal effect estimation and inference.

Figure \ref{boxplot1} shows the distribution of AIPW and nAIPW for different hyperparameter settings of NNs. The nAIPW estimator outperforms AIPW in almost all the scenarios. As the AIPW give huge values in some simulation iterations the log of the estimation is taken in Figure \ref{boxplot1}.


\begin{figure}[ht!]
\caption{The distribution of log of the estimated AIPW and nAIPW in the 100 simulated iterations. The performance of nAIPW is clearly superior to the performance of AIPW as it is less dispersed and has more stable in terms of different hyperparameter settings. $p$ is either 32 or 300 for the small or large datasets and $q\approx\frac{p}{10}$, that is 3 or 30.}
\includegraphics[scale=0.8]{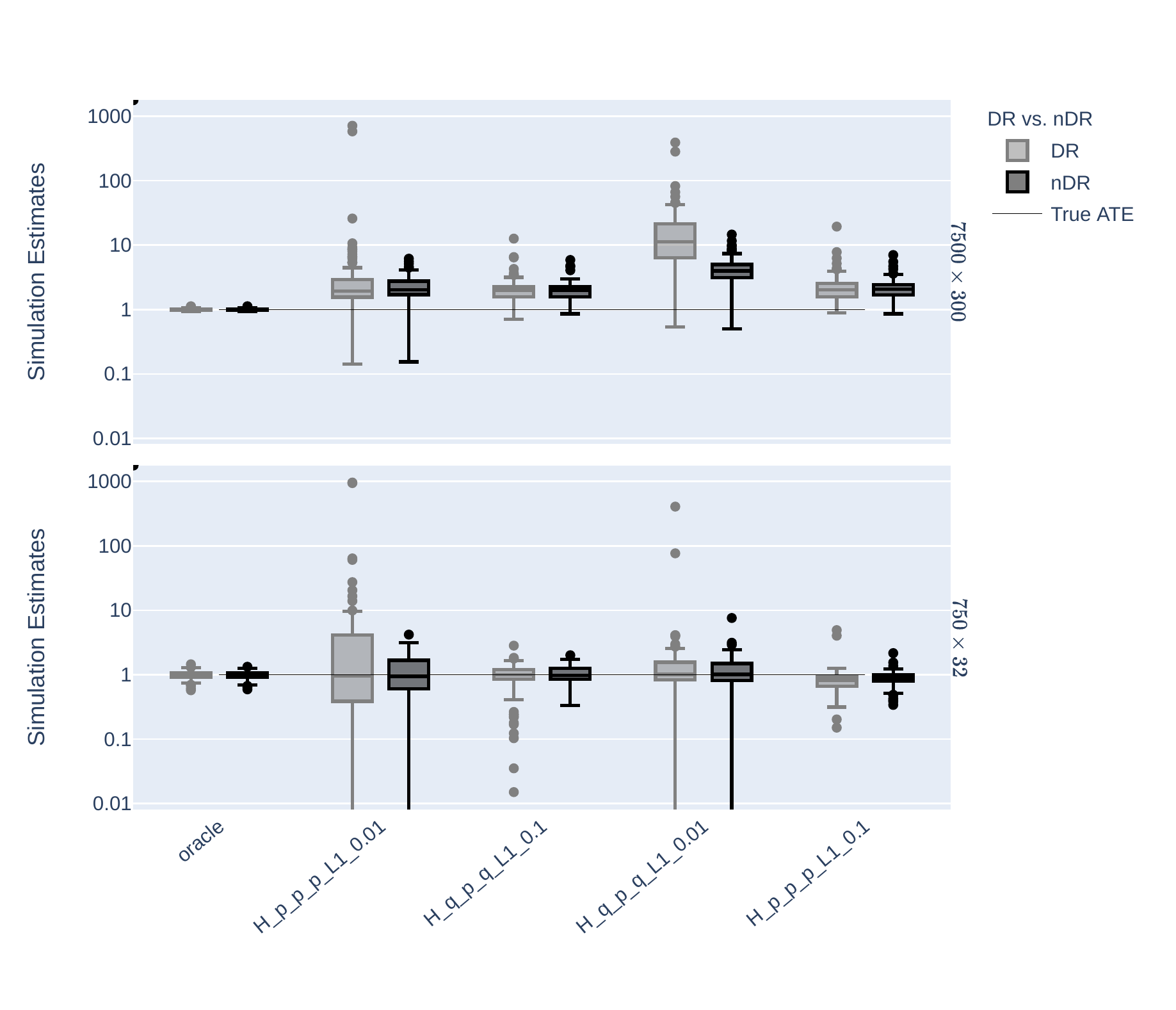}
\label{boxplot1}
\end{figure}

We also compare the estimators in different scenarios with bias, variance and their trafeoff measures:

\begin{equation} \label{nonlinearfs}
\begin{split}
    &\text{Bias} \quad \quad \hat{\delta} = \beta - \frac{1}{m}\sum_{j=1}^{m} \hat{\beta}_j\\
    &\text{MC std} \quad \quad \hat{\sigma}_{MC} = \sqrt{\frac{1}{m}\sum_{j=1}^{m} (\hat{\beta}_j-\hat{\mu})^2}\\
    &\text{MC RMSE} \quad \quad RMSE = \sqrt{\hat{\sigma}_{MC}^2 + \hat{\delta}^2}\\
    &\text{Asymptotic StdErr} \quad \quad \hat{\sigma}_{SE} = \frac{1}{m}\sum_{j=1}^{m}\hat{\sigma}_j,
\end{split}
\end{equation}
where $\beta=1$, with $\hat{\beta}_j$'s are the AIPW or nAIPW estimations in the $j^{\textit{th}}$ simulation round, $\hat{\mu} = \frac{1}{m}\sum_{j=1}^{m}\hat{\beta}_j$ and  $m=100$ is the number of simulation rounds, and $\hat{\sigma}$ is square root of \eqref{varAIPW} or \eqref{nAIPWvar}.

Figure \ref{nAIPWvsAIPW} demonstrates the bias, MC Standard Deviation (MC std) and the Root Mean Square Error (RMSE) of AIPW and nAIPW estimators for the scenarios where $n=750$ and $n=7500$, and for 4 hyperparameter sets ($L_1$ regularization and width of the dNN). In general, in each figure of the panel, the hyperparameter scenarios in the left imply more complex model (with less regularization or narrower network). In these graphs, the lower the values, the better the estimator. For the smaller data size $n=750$ in the left 3 panels, the worst results are attributed to AIPW when there is least regularization and the hidden layers are as wide as number of inputs. To have more clear plots for comparison, we have skipped plotting the upper bounds as they were large numbers; the lower bounds are enough to show the significance of the results. In the scenarios where there are smaller number of hidden neurons with 0.01 $L-1$ regularization, the bias, variance and their trade-off, RMSE, are more stable. By increasing the $L_1$ regularization, these measure go down which indicates the usefulness of regularization and AIPW normalization for causal estimation and inference. Almost the same pattern is seen for larger size ($n=7500$) scenario, except the bump in all the three measures in the hyperparameter scenario where regularization remains the same ($L_1=0.01$) and the number of neurons in the first and last hidden layers are small too. In all 3 measures of bias, standard deviation and RMSE, nAIPW is superior to AIPW, or at least there is no statistically significant difference between AIPW and nAIPW. 




\begin{figure}[ht!]
\centering
\caption{The bias, MC standard error and the root mean square error of the AIPW and nAIPW estimators for different data sizes and NN hyperparameters ($L_1$ regularization and width of the network.) $p$ is either 32 or 300 for the small or large datasets and $q\approx\frac{p}{10}$, that is 3 or 30. The estimates are capped at -10 and 10.}
\includegraphics[scale=0.7]{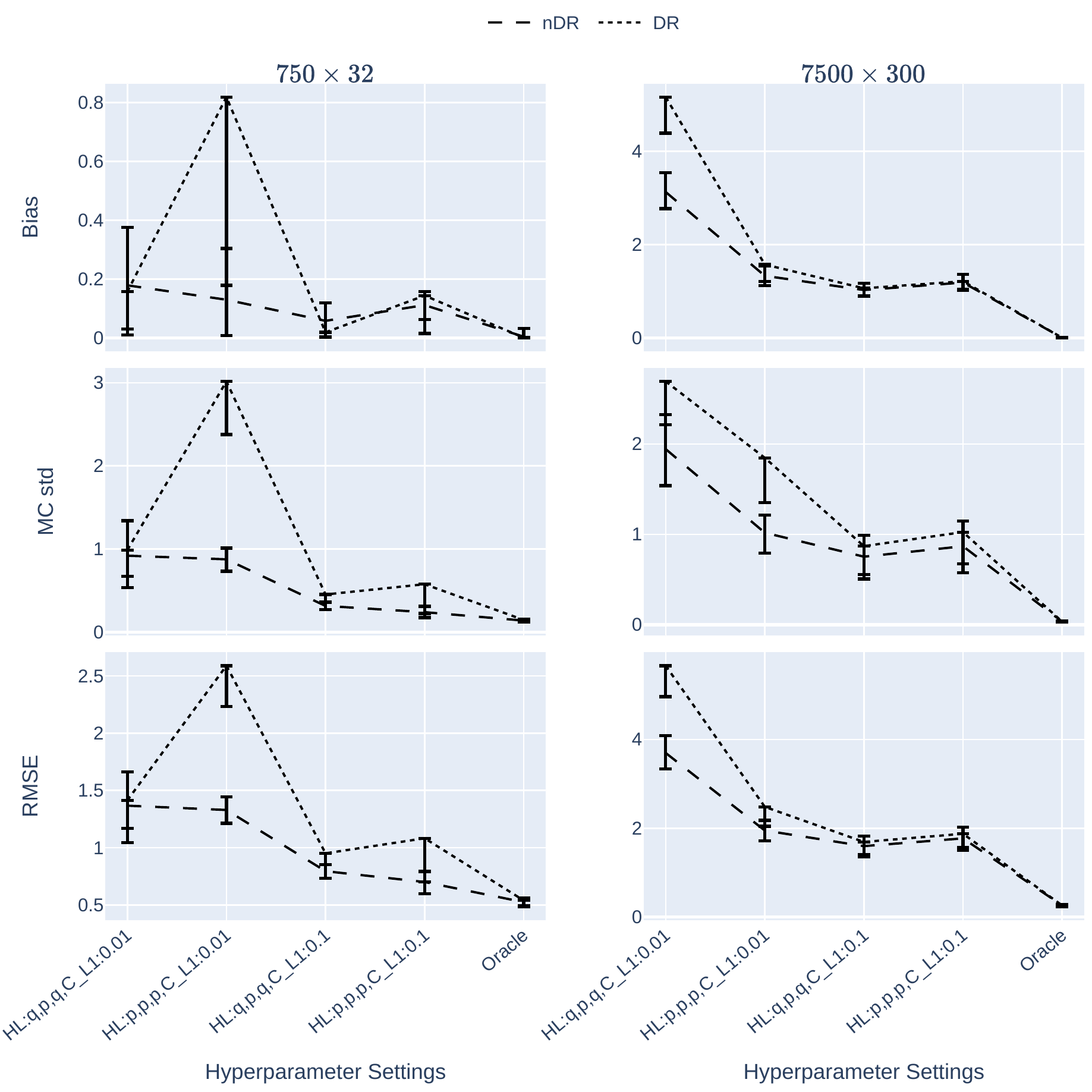}
\label{nAIPWvsAIPW}
\end{figure}


Figure \ref{nAIPWvsAIPW-se} illustrates how the theoretical standard error formulas perform in MC experiments, whether they are close to the MC standard deviations. In these two graphs, smaller does not necessarily imply superiority. In these graphs, closeness of MC std and SE for both AIPW and nAIPW is desired. In the left two scenarios where NN's complexity is high, the MC std and SE are far from each other. Also, in the hyperparameter scenarios where both width of the NNs is small and regularization is higher, the MC std and SE are well separated. The best scenario seems to be were only regularization if higher but there is enough number of neurons in all the 3 hidden layers.

\begin{figure}[ht!]
\centering
\caption{The MC standard deviation and the standard error of the AIPW and nAIPW estimators for different data sizes and NN hyperparameters ($L_1$ regularization and width of the network.)}
\includegraphics[scale=0.7]{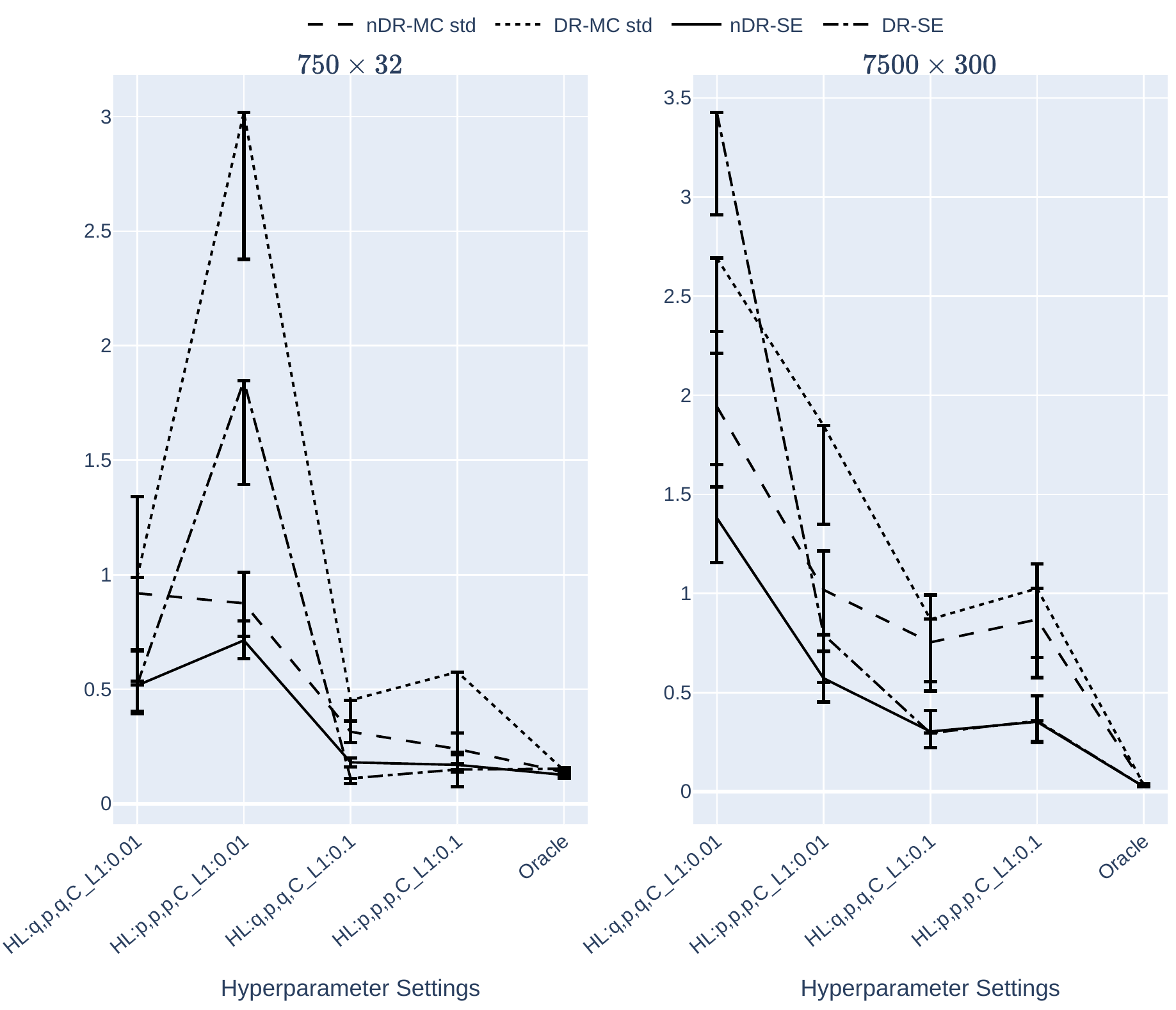}
\label{nAIPWvsAIPW-se}
\end{figure}


\section{Application: Food Insecurity and BMI}\label{applications}

The Canadian Community Health Survey (CCHS) is a cross-sectional survey that collects data related to health status, health care utilization and health determinants for the Canadian population in multiple cycles. The 2021 CCHS covers the population 12 years of age and over living in the ten provinces and the three territorial capitals. Excluded from the survey's coverage are: persons living on reserves and other Aboriginal settlements in the provinces; and some other sub-populations that altogether represent less than 3\% of the Canadian population aged 12 and over. Examples of modules asked in most cycles are: general health, chronic conditions, smoking, and alcohol use. For the 2021 cycle, thematic content on food security, home care, sedentary behaviour and depression, among many others, have been included. In addition to the health component of the survey are questions about respondent characteristics such as labour market activities, income, and socio-demographics.

In this article, we use CCHS dataset to investigate the Food Insecurity and Body Mass Index (BMI) causal relationship. Other gathered information in CCHS is used which might contain the potential confounders, y-predictors and instrumental variables. The data is a survey and needs special methods such as the resampling or bootstrap methods to estimate the standard errors. However, here, we use the data to illustrate the utilization of dNN on the causal parameters in case of positivity violation. In order to reduce the amount of variability in the data, we have focused on the subpopulation 18-65 years of age.

Figure \ref{cchs-figs} shows the ATE estimates and their 95\% asymptotic confidence intervals with nIPW, DR, and nDR methods, with 4 different neural networks which vary in terms of width and strength of $L_1$ reqularization. The scenario that results in the largest $R^2$ (as a measure of outcome prediction performance) outperforms the other scenarios. And the scenario that results in the largest AUC (as a measure of treatment model performance) results in the largest confidence intervals. This is because of more extreme propensity scores in this scenario. It is worth noting that the normalized IPW has smaller confidence intervals as compared to AIPW. However, as we do not know the truth about the ATE in this dataset, we can never know which estimator outperforms the other. To get an insight about this using the input matrix of this data, we simulated multiple treatments and outcomes with small to strong confounders and IVs and compared AIPW and nAIPW. In virtually all of them, the nAIPW is the favorite one. We are not presenting these results in this draft, but can be provided to the readers upon request.

\begin{figure}[ht!]
\centering
\caption{The ATE estimates and their asymptotically calculated 95\% confidence intervals with NIPW, DR, and nDR methods.}
\includegraphics[scale=0.75]{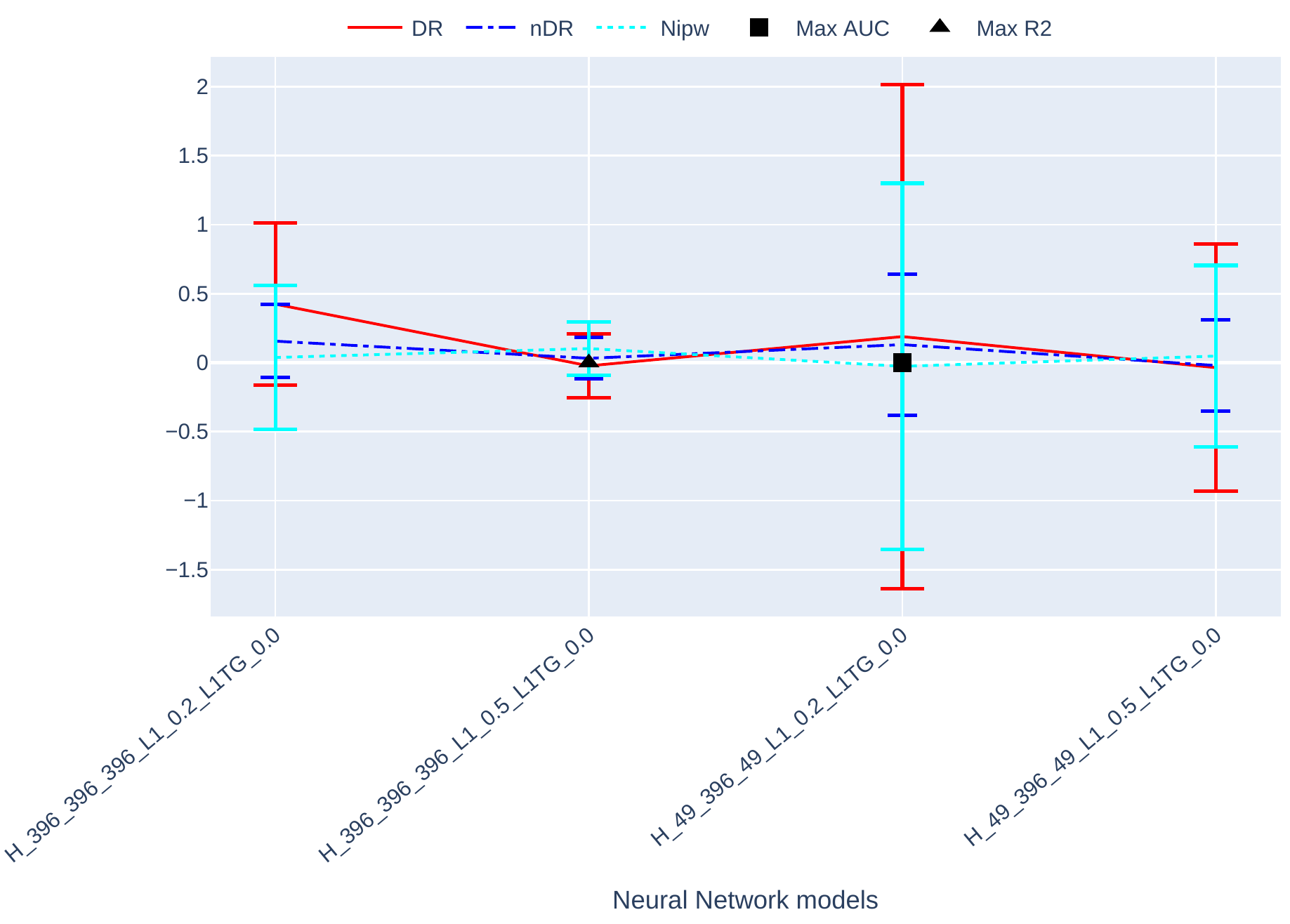}
\label{cchs-figs}
\end{figure}

\section{Discussion}\label{discussion}

Utilizing Machine Learning algorithms such as NNs in the first step estimation process is comforting as the concerns with regards to the non-linear relationships between the confounders and the treatment and outcome are addressed. However, there is no free lunch, and using NNs has its own caveats including theoretical as well as numerical challenges. \citet{farrell2018deep} addressed the theoretical concerns where they calculated the generalization bounds when two separate NNs are used to model the treatment and the outcome. However, they did not use or take into account regularization techniques such as $L_1$ or $L_2$ regularization. As NNs are complex algorithms, they provide perfect prediction for the treatment when the predictors are strong enough (or might overfit). Through monte carlo (MC) simulations, we illustrated that causal estimation and inference with double NNs can fail without the usage of regularization techniques such as $L_1$ and or extreme propensity scores are not taken care of. If $L_1$ regularization is not used, the normalization of the AIPW estimator (i.e. nAIPW) is advised to be employed as it dilutes the extreme predictions of the propensity score model and provide better bias, variance and RMSE. Our scenario analysis also showed that in case of violation of the positivity assumption in AIPW, normalization helps avoid blowing up the estimator (and standard error), but might be ineffective in taking into account confounding effects for some observations. An alternative might be trimming the propensity scores to avoid extreme values. However, the causal effect estimator will no longer be consistent and besides, there is no determined method where to trim. We hypothesize that
$\hat{h}^1=\hat{g}\hat{\mathbb{E}}\frac{A}{\hat{g}}\times I\big(\hat{g} \in (0, \epsilon)\big)+\hat{g}\times I\big(\hat{g} \in (\epsilon, 1)\big)$ and $\hat{h}^0=(1-\hat{g})\hat{\mathbb{E}}\frac{1-A}{1-\hat{g}}\times I\big(\hat{g} \in (1- \epsilon, 1)\big)+(1-\hat{g})\times I\big(\hat{g} \in (0, 1-\epsilon)\big)$
where $\epsilon = \frac{1}{n}$ will result in a consistent estimator, making the right assumptions and will outperform both of AIPW and nAIPW in case empirical positivity violation. We will study this hypothesis in a future article.

Another reason why NNs without regularization fail in the causal estimation and inference is that the networks are not targeted, and are not directly designed for these tasks. NNs are complex algorithms with strong predictive powers. This does not accurately server the purpose of causal parameter estimation, where the positivity assumption can be violated if strong confounders and/or instrumental variables \citep{angrist2008mostly} exist in the data. Ideally, the network should target the confounders and should be able to automatically limit the strength of predictors so that the propensity scores are not extremely close to 1 or 0. This was not investigated in this article and a solution to this problem is postponed to another research.

In Section \ref{applications} we applied the asymptotic standard errors of both AIPW and nAIPW, where the latter achieves smaller standard errors. That said, we acknowledge the fact that the asymptotic standard errors when using complex ML are not reliable, and in fact, they underestimate the calculated MC standard deviations, as illustrated in the simulations section \ref{simulations}. This is partly because of the usage of the complex algorithms such as NNs for estimation of the nuisance parameters in the first step. Further, the asymptotic distribution of the estimators are not symmetric (and thus are not normal). However, nAIPW is more symmetric than AIPW, according to the simulations while both estimators suffer from outliers. We will investigate the reasons and possible remedies for both asymptotic distribution and standard errors of the estimators in a future paper. 

\newpage
\section{Appendix}\label{appendix}

First let's review the proof sketch of the AIPW double robustness:

\eqref{trueAIPW} can be consistently estimated by
\begin{multline}\label{AIPWa}
    \hat{\beta}_{AIPW} = \frac{1}{n}\sum_{i=1}^n\Bigg[\Big(\frac{A_iY_i-\hat{Q}(1, W_i)(A_i-\hat{\mathbb{E}}[A_i |W_i])}{\hat{\mathbb{E}}[A_i |W_i]} \Big)
   -
   \Big(\frac{(1-A_i)Y_i+\hat{Q}(0, W_i)(A_i-\hat{g}_i)}{1-\hat{\mathbb{E}}[A_i |W_i]}\Big)\Bigg] = \\
   \frac{1}{n}\sum_{i=1}^n \Bigg(\big[\frac{A_i}{\hat{g}_i} - \frac{1-A_i}{1-\hat{g}_i}\big]y_i - \frac{A_i-\hat{g}_i}{\hat{g}_i(1-\hat{g}_i)}\big[(1-\hat{g}_i)\hat{Q}^1_i+\hat{g}_i\hat{Q}^0_i\big]\Bigg) = \\
   \frac{1}{n}\sum_{i=1}^n \Big(\frac{A_i(y_i-\hat{Q}^1_i)}{\hat{g}_i} - \frac{(1-A_i)(y_i-\hat{Q}^0_i)}{1-\hat{g}_i}\Big) + 
    \frac{1}{n}\sum_{i=1}^n \big(\hat{Q}^1_i-\hat{Q}^0_i\big)
\end{multline}

The second formula guarantees the consistency of AIPW if $\hat{g}$ is consistent, and the third expression the consistency of $\hat{Q}^0_i$ and $\hat{Q}^1_i$ are consistent.

\begin{theorem}[nAIPW Double Robustness]\label{AIPWproof}
Let nAIPW estimator of risk difference be 

\begin{equation}\label{empricnAIPW}
    \hat{\beta}_{nAIPW} = \hat{\mathbb{E}}(\hat{Q}^1-\hat{Q}^0) + \hat{\mathbb{E}}\Big(\frac{A(Y-\hat{Q}^1)}{\hat{g}\hat{\mathbb{E}}[\frac{A}{\hat{g}}]}-\frac{(1-A)(Y-\hat{Q}^0)}{(1-\hat{g})\hat{\mathbb{E}}[\frac{1-A}{1-\hat{g}}]}\Big).
\end{equation}
Then $\hat{\beta}_{nAIPW}$ is a consistent estimator of $\beta$ if $\hat{g}\xrightarrow{p}g$ or $\hat{Q}^k\xrightarrow{p}Q^k$, $k=0, 1$.
\end{theorem}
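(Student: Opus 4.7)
The plan is to split the proof into two cases according to which nuisance model is correctly specified and, in each case, identify the probability limit of $\hat{\beta}_{nAIPW}$ via the weak law of large numbers, Slutsky's theorem and the continuous mapping theorem. The empirical positivity condition in \eqref{assumptions} keeps every denominator bounded away from zero, so the ratios appearing in \eqref{empricnAIPW} are continuous functions of their arguments and limits may be passed inside them.

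For the first case, suppose $\hat{g} \xrightarrow{p} g$ and let $Q^{*k}$ denote the (possibly incorrect) probability limit of $\hat{Q}^k$. By the tower rule, $\mathbb{E}[A/g] = \mathbb{E}[g/g] = 1$ and $\mathbb{E}[(1-A)/(1-g)] = 1$, so both normalizing factors $\hat{\mathbb{E}}[A/\hat{g}]$ and $\hat{\mathbb{E}}[(1-A)/(1-\hat{g})]$ converge in probability to $1$. Conditioning on $W$ and using $\mathbb{E}[AY\mid W] = gQ^1$ yields
\[
\mathbb{E}\!\left[\frac{A(Y-Q^{*1})}{g}\right] = \mathbb{E}[Q^1 - Q^{*1}],
\]
and analogously $\mathbb{E}[(1-A)(Y-Q^{*0})/(1-g)] = \mathbb{E}[Q^{*0} - Q^0]$. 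Adding $\hat{\mathbb{E}}[\hat{Q}^1 - \hat{Q}^0] \xrightarrow{p} \mathbb{E}[Q^{*1} - Q^{*0}]$ cancels the spurious $Q^{*k}$ terms and leaves $\mathbb{E}[Q^1 - Q^0] = \beta$.

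For the second case, suppose $\hat{Q}^k \xrightarrow{p} Q^k$ and let $g^{*}$ be the limit of $\hat{g}$. Because $\mathbb{E}[Y\mid A=1, W] = Q^1$, conditioning on $W$ gives
\[
\mathbb{E}\!\left[\frac{A(Y - Q^1)}{g^{*}}\,\Big|\,W\right] = \frac{g}{g^{*}}\,\mathbb{E}\bigl[Y - Q^1 \,\big|\, A=1, W\bigr] = 0,
\]
so after dividing by the nonzero constant $\mathbb{E}[A/g^{*}]$ the first augmentation term collapses in probability to $0$, and the analogous computation kills the control piece. Only the SR component survives, yielding $\hat{\mathbb{E}}[\hat{Q}^1 - \hat{Q}^0] \xrightarrow{p} \mathbb{E}[Q^1 - Q^0] = \beta$.

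The substantive work is concentrated in the bookkeeping for the normalization factors rather than in any deep probabilistic step: one must show that $\hat{g}\,\hat{\mathbb{E}}[A/\hat{g}]$ converges jointly with its numerator, which follows from continuous mapping once the denominator is bounded away from zero by empirical positivity. Case~1 is the trickier of the two because the limit $Q^{*k}$ of $\hat{Q}^k$ need not exist in full generality under a misspecified outcome model; I would therefore assume a mild integrability/boundedness condition on $\hat{Q}^{k}$ (which is implicit in the paper's subsequent asymptotic-normality arguments) to guarantee $\hat{\mathbb{E}}[\hat{Q}^k] \xrightarrow{p} \mathbb{E}[Q^{*k}]$. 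No rate on the nuisance estimators is required for mere consistency; the rate condition only surfaces later when controlling the remainder in \eqref{remainder}.
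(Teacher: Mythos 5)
Your proposal is correct in substance and reaches the stated conclusion, but the two halves relate to the paper's proof differently. For the case $\hat{Q}^k \xrightarrow{p} Q^k$, your argument (condition on $W$, use $\mathbb{E}[Y\mid A=1,W]=Q^1$ so the augmentation terms collapse to zero and only the SR component survives) is essentially the paper's. For the case $\hat{g}\xrightarrow{p} g$, you take a genuinely different route: the paper rearranges \eqref{empricnAIPW} into the nIPW estimator applied to $y$ plus a remainder $\hat{\mathbb{E}}\big[-\hat{Q}^1(A-\hat{g}\hat{w}^1)+\hat{Q}^0(1-A-(1-\hat{g})\hat{w}^0)\big]$, invokes the known consistency of nIPW, and argues the remainder vanishes because $\hat{w}^1,\hat{w}^0\xrightarrow{p}1$; you instead introduce pseudo-limits $Q^{*k}$ and show the spurious terms cancel directly. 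Your route is more explicit about what is actually being assumed (the existence of $Q^{*k}$ and an integrability condition, which the paper never states but implicitly needs for its own remainder to vanish), and it makes the cancellation mechanism transparent; the paper's route avoids naming a limit for $\hat{Q}^k$ at the cost of leaning on the separately established consistency of nIPW. One genuine slip to fix: your displayed identity for the control arm has the wrong sign --- conditioning on $W$ gives $\mathbb{E}\big[(1-A)(Y-Q^{*0})/(1-g)\big]=\mathbb{E}[Q^0-Q^{*0}]$, not $\mathbb{E}[Q^{*0}-Q^0]$, and with your sign the $Q^{*0}$ terms would not cancel. Restoring the correct sign recovers the cancellation and the conclusion, so this is a typo-level error rather than a gap in the approach.
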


\begin{proof}

From \eqref{empricnAIPW}, $\hat{\beta}_{nAIPW}$ is a consistent estimator of $\beta$ if $\hat{Q}^0_i$ and $\hat{Q}^1_i$ are consistent. This is because the first term $\hat{\mathbb{E}}(\hat{Q}^1-\hat{Q}^0)$ converges to $\beta$, while the second term tends to zero. 

By re-expressing \eqref{empricnAIPW}, the other argument is clear. Let $\hat{w}^1=\hat{\mathbb{E}}[\frac{A}{\hat{g}}]$ and $\hat{w}^0=\hat{\mathbb{E}}[\frac{1-A}{1-\hat{g}}]$, we have:

\begin{multline}\label{empricnAIPW2}
    \hat{\beta}_{nAIPW} = \frac{1}{n}\sum_{i=1}^n \Bigg(\big[\frac{A_i}{\hat{g}_i\hat{w}^1_i} - \frac{1-A_i}{(1-\hat{g}_i)\hat{w}^0_i}\big]y_i\Bigg) + \hat{\mathbb{E}}\Big(\hat{Q}^1-\hat{Q}^0 - \frac{A_i\hat{Q}^1}{\hat{g}\hat{w}^1}+\frac{(1-A_i)\hat{Q}^0}{(1-\hat{g})\hat{w}^0}\Big)=\\
    \frac{1}{n}\sum_{i=1}^n \Bigg(\big[\frac{A_i}{\hat{g}_i\hat{w}^1_i} - \frac{1-A_i}{(1-\hat{g}_i)\hat{w}^0_i}\big]y_i -
    \hat{Q}^1_i\big(A_i-\hat{g}_i\hat{w}^1_i\big) + 
    \hat{Q}^0_i\big(1-A_i-(1-\hat{g}_i)\hat{w}^0_i\big)
    \Bigg)
\end{multline}

The first expression in \eqref{empricnAIPW2} is the same as nIPW estimator which is consistent estimator of $\beta$ \citep{lunceford2004stratification}. Now, under consistency of $\hat{g}$, the second term tends to zero , as $\hat{w}_1 \xrightarrow{p} 1$ and $\hat{w}_0 \xrightarrow{p} 1$.

In the theorem below it is shown that there is an M-estimation equivalent to $\beta_{nAIPW}$ and $w^1$ and $w^0$. This, plus the continuous mapping theorem proves that $\sum_{i=1}^n\frac{A_i}{\hat{g}_i}$ converges in probability to $n$ if $\hat{g}\xrightarrow{p}g$.



\end{proof}




\begingroup
\def\thetheorem{\ref{nAIPWvartheorem}}

\begin{theorem}

The asymptotic variance of the nAIPW \eqref{nAIPW} is

\begin{equation}\label{nAIPWvar1}
    \hat{\sigma}^2_{nAIPW} = \sum_{i=1}^n \Big(
    \frac{A_i(y_i-\hat{Q}^1_i)w_i^{(1)}}{\sum_{j=1}^n A_jw_j^{(1)}} -    \frac{(1-A_i)(y_i-\hat{Q}^0_i)w_i^{(0)}}{\sum_{j=1}^n (1-A_j)w_j^{(0)}} + \frac{1}{n}\big(\hat{\beta}_{SR} - \hat{\beta}_{nAIPW}\big) \Big)^2,
\end{equation}
where $\hat{Q}^k_i = \hat{Q}(k,W_i)$ and $\hat{g}_i = \hat{\mathbb{E}}[A_i |W_i]$. 

\end{theorem}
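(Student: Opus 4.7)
The plan is to apply the standard M-estimation sandwich formula to the system of estimating equations \eqref{est-eqs}, which characterizes $\hat{\beta}_{nAIPW}$ jointly with the normalizing constants. I would stack the scores into $\phi(O;\beta,\gamma,\lambda) = (\phi_1,\phi_2,\phi_3)^\top$, where $\phi_1$ is the EIF-like score, $\phi_2 = A/g - \gamma$, and $\phi_3 = (1-A)/(1-g) - \lambda$, treating $Q^1, Q^0, g$ as first-step nuisance inputs. General Z-estimator theory (\citet{van2000asymptotic}, Chapter 5) then yields $\sqrt{n}(\hat{\theta}-\theta) \Rightarrow \mathcal{N}(0, A^{-1} B (A^{-1})^\top)$ with $A = \mathbb{E}[\partial_\theta \phi]$ and $B = \mathbb{E}[\phi\phi^\top]$, and the asymptotic variance of $\hat{\beta}_{nAIPW}$ is the $(1,1)$ entry of this sandwich.

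The key simplification to carry out next is to observe that $A$ is lower triangular with $-1$'s on the diagonal. Indeed, $\phi_2$ and $\phi_3$ do not depend on $\beta$, so $\partial\phi_2/\partial\beta = \partial\phi_3/\partial\beta = 0$; and under correct specification of either the outcome or the propensity model, $\mathbb{E}[A(Y-Q^1)/g] = \mathbb{E}[(1-A)(Y-Q^0)/(1-g)] = 0$, which kills $\mathbb{E}[\partial\phi_1/\partial\gamma]$ and $\mathbb{E}[\partial\phi_1/\partial\lambda]$. Because $A$ is triangular with unit-magnitude diagonal, $A^{-1}$ is triangular with $-1$'s on its diagonal, and the $(1,1)$ entry of $A^{-1} B (A^{-1})^\top$ collapses to $\mathbb{E}[\phi_1^2]$. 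The natural estimator is therefore $\hat{\sigma}^2_{nAIPW} = \frac{1}{n^2}\sum_i \phi_1(O_i,\hat{\theta})^2$.

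What remains is bookkeeping: substitute $\hat{\gamma} = \frac{1}{n}\sum_j A_j w_j^{(1)}$ and $\hat{\lambda} = \frac{1}{n}\sum_j (1-A_j)w_j^{(0)}$ from the second and third estimating equations into $\phi_1$. The identity $A_i/(\hat{\gamma}\hat{g}_i) = n A_i w_i^{(1)}/\sum_j A_j w_j^{(1)}$, and its twin for the control arm, transforms the inverse-propensity weights into the normalized weights that appear in \eqref{nAIPWvar1}; the factor $n$ from this conversion pairs with the $1/n^2$ prefactor to produce the single $\sum_i$ without any explicit $n^{-2}$.

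The main obstacle I anticipate is reconciling the $\frac{1}{n}(\hat{\beta}_{SR} - \hat{\beta}_{nAIPW})$ centering in \eqref{nAIPWvar1} with what the direct computation gives, namely $\frac{1}{n}(\hat{Q}^1_i - \hat{Q}^0_i - \hat{\beta}_{nAIPW})$. To resolve this I would augment the Z-system with a fourth equation $\mathbb{E}[Q^1 - Q^0 - \mu] = 0$ and replace the $Q^1 - Q^0$ inside $\phi_1$ by the scalar parameter $\mu$; the solution is $\hat{\mu} = \hat{\beta}_{SR}$. The augmented $A$ matrix remains triangular with $-1$'s on the diagonal, so the $(1,1)$ sandwich reduction survives, and the evaluated $\phi_1$ now carries the constant $\hat{\beta}_{SR} - \hat{\beta}_{nAIPW}$ exactly as claimed. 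As a cross-check, I would verify that the same expression arises from the one-step/von-Mises expansion \eqref{taylor}--\eqref{onestep}: the asymptotic influence function of $\hat{\beta}_{nAIPW}$ is precisely the bracketed quantity in \eqref{nAIPWvar1}, and taking the sample variance of the influence-function contributions reproduces \eqref{nAIPWvar1} directly.
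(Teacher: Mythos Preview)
Your approach is the same M-estimation sandwich computation the paper uses, and your observation that $\mathbb{E}[\partial\phi_1/\partial\gamma]$ and $\mathbb{E}[\partial\phi_1/\partial\lambda]$ vanish at the truth is in fact cleaner than the paper's route: the paper carries the full upper-triangular Jacobian through, obtains $\mathbb{E}[\phi^2]+\epsilon$ for the $(1,1)$ entry, and then argues (partly by appeal to simulations) that the cross term $\epsilon$ is asymptotically negligible. Your diagonal-Jacobian shortcut gets to $\mathbb{E}[\phi_1^2]$ directly, at the cost of requiring evaluation at the true $Q^k$ (not merely a correct $g$), which is the standard setting for an asymptotic variance anyway. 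A small wording fix: the reduction works because the Jacobian is \emph{diagonal} at the truth, not merely triangular; triangularity alone would not force the first row of $A^{-1}$ to be $(-1,0,\ldots,0)$.

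There is, however, a genuine gap in your fourth-equation fix. After you augment with $\phi_4=Q^1-Q^0-\mu$ and replace $Q^1-Q^0$ in $\phi_1$ by $\mu$, the Jacobian picks up $\partial\phi_1'/\partial\mu=1$, so it is no longer diagonal. The first row of $A^{-1}$ becomes $(-1,0,0,-1)$, and the $(1,1)$ sandwich entry is
\[
\mathbb{E}\big[(\phi_1'+\phi_4)^2\big]
=\mathbb{E}\!\left[\Big(\tfrac{A(Y-Q^1)}{\gamma g}-\tfrac{(1-A)(Y-Q^0)}{\lambda(1-g)}+Q^1-Q^0-\beta\Big)^{\!2}\right],
\]
i.e.\ you recover exactly the original $\phi_1$ with the observation-specific $q_i=\hat Q^1_i-\hat Q^0_i$, not the constant $\hat\beta_{SR}$. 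So the augmentation does not deliver the centering stated in \eqref{nAIPWvar1}. To be fair to you, the paper's own proof lands on $\frac{1}{n}(q_i-\hat\beta_{nAIPW})$ in its final displayed line and then simply asserts this ``is the same as'' \eqref{nAIPWvar1}; the replacement of $q_i$ by $\hat\beta_{SR}$ is not justified there either. The honest conclusion is that the sandwich yields the version with $q_i$, and \eqref{nAIPWvar1} as written is at best an asymptotically equivalent variant, not an identity.
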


\addtocounter{theorem}{-1}
\endgroup

\begin{proof}

Let's define a few notations first:
\begin{equation} \label{notations}
\begin{split}
        q &= Q^1 - Q^0,\\
        g &= \mathbb{E}[A|W],\\
        f &= y - Q^1,\\
        h &= y - Q^0,\\
        v &= \frac{A}{g},\\
        u &= \frac{1-A}{1-g}.\\
\end{split}
\end{equation}

With this set of notations, the nAIPW estimator \eqref{nAIPW} can be written as

\begin{equation}\label{nAIPW11}
        \hat{\beta}_{nAIPW} = \sum_{i=1}^n \Big(\frac{v_if_i}{\sum_{j=1}^n v_j} - \frac{u_ih_i}{\sum_{j=1}^n u_j} + \frac{q_i}{n}\Big),
\end{equation}

Following the methods in \citep{stefanski2002calculus}, to find an estimating equation that whose solution is $\hat{\beta}_{nAIPW}$, we introduce two more estimating equations. Employing the M-Estimation theory, we will prove that nAIPW is asymptotically normal, and we will calculate its standard error.

It can be seen that \eqref{nAIPW11} is not a solution to an M-estimator directly. However, by defining two more parameters and concatenating their estimating equations we obtain a 3-dim multivariate estimating equations 
\begin{equation} \label{esteqs00}
\begin{split}
        \sum_{i=1}^n \Big(\frac{v_if_i}{\gamma} - \frac{u_ih_i}{\lambda} + \frac{1}{n}(q_i - \beta)\Big) &= 0,\\
        \sum_{i=1}^n \Big(v_i - \frac{\gamma}{n}\Big)&= 0,\\
        \sum_{i=1}^n \Big(u_i - \frac{\lambda}{n}\Big) &= 0.
\end{split}
\end{equation}

To ease the calculations, we modify the first estimating equation with an equivalent one, but the results will not differ:
\begin{equation} \label{esteqs}
\begin{split}
        \sum_{i=1}^n \lambda v_if_i - \gamma u_ih_i + \frac{\gamma \lambda}{n}(q_i - \beta) &= 0,\\
        \sum_{i=1}^n v_i - \frac{\gamma}{n}&= 0,\\
        \sum_{i=1}^n u_i - \frac{\lambda}{n} &= 0.
\end{split}
\end{equation}

By defining the following notations,
\begin{align*}\label{notations1}
\psi = \begin{pmatrix}
\phi\\
\eta\\
\Omega
\end{pmatrix}
 = \begin{pmatrix}
 \lambda vf - \gamma uh + \frac{\gamma \lambda}{n}(q - \beta)\\
 v - \frac{\gamma}{n}\\
 u - \frac{\lambda}{n}
\end{pmatrix},
\end{align*}
we have $\sum_{i=1}^n \psi_i = 0 $, or 
\begin{equation}\label{highdimee}
\begin{split}
        \sum_{i=1}^n \phi_i, &= 0,\\
        \sum_{i=1}^n \eta_i&= 0,\\
        \sum_{i=1}^n \Omega_i &= 0.
\end{split}
\end{equation}

The M-Estimation theory implies that under regulatory conditions, the solutions to these estimating equations converge in distribution to a multivariate normal distribution:
\begin{align*}
\sqrt{n}\begin{pmatrix}
\hat{\beta}_{nAIPW}\\
\hat{\gamma}\\
\hat{\lambda}
\end{pmatrix} &\sim  MVN
\begin{pmatrix}
\theta \ , \
\mathbf{I}^{-1}(\theta)\mathbf{B}(\theta)\mathbf{I}^{-1}(\theta)^T
\end{pmatrix}
\end{align*}
where 
\begin{align*}
\theta = \begin{pmatrix}
\beta\\
\gamma\\
\lambda
\end{pmatrix},
\end{align*}

\begin{align}\label{I}
\mathbf{I}(\theta) = -\mathbb{E}\frac{\partial \psi}{\partial \theta^T} = \frac{1}{n}\begin{pmatrix}
\frac{\lambda \gamma}{n} & \mathbb{E}(uh-\frac{\lambda}{n}(q-\beta)) & -\mathbb{E}(vf+\frac{\gamma}{n}(q-\beta))\\
0 & \frac{1}{n} & 0\\
0 & 0 & \frac{1}{n}
\end{pmatrix},
\end{align}
whose inverse is
\begin{align}\label{Iinv}
\mathbf{I}^{-1}(\theta) = \frac{n}{\gamma \lambda}\begin{pmatrix}
1 & -n\mathbb{E}(uh-\lambda(q-\beta)) & n\mathbb{E}(vf+\frac{\gamma}{n}(q-\beta))\\
0 & \gamma \lambda & 0\\
0 & 0 & \gamma \lambda
\end{pmatrix},
\end{align}
and,
\begin{align}\label{B}
\mathbf{B}(\theta) = \mathbb{E}\psi \psi^T = \begin{pmatrix}
\mathbb{E} \phi^2 & \mathbb{E} \phi\eta & \mathbb{E} \phi \Omega\\
\mathbb{E} \phi\eta& \mathbb{E} \eta^2 & \mathbb{E} \eta \Omega\\
\mathbb{E} \phi \Omega& \mathbb{E} \eta \Omega& \mathbb{E} \Omega^2
\end{pmatrix}.
\end{align}

In order to estimate the variance of $\hat{\beta}_{nAIPW}$, we do not need to calculate all entries of the variance-covariance matrix, only the first entry:

\begin{align}\label{IinvBInvT}
\frac{1}{n} (\frac{n^2}{(\gamma \lambda)^2}) \begin{pmatrix}
\mathbb{E} \phi^2 + \epsilon & \star & \star\\
\star& \star & \star\\
\star& \star& \star
\end{pmatrix}.
\end{align}

The $\star$ entries are irrelevant to the calculation of variance of nAIPW and the term $\epsilon$ is a very long expression which involves terms converge to zero faster than the actual estimating equations \eqref{highdimee} \citep{hines2021demystifying} (also verified by simulations):
\begin{multline}
    \epsilon = -\mathbb{E}\phi \eta (n\mathbb{E} uh+\lambda(\beta-q))+\mathbb{E}\phi \Omega(n\mathbb{E}vf-\gamma (\beta-q))-(n\mathbb{E} uh+\lambda(\beta-q))(-\mathbb{E}\eta^2(n\mathbb{E} uh+\lambda(\beta-q))+\\
    \mathbb{E}\eta \Omega(n\mathbb{E}vf-\gamma (\beta-q))+\mathbb{E}\phi \eta )+(n\mathbb{E}vf-\gamma (\beta-q))(-\mathbb{E}\eta \Omega(n\mathbb{E} uh+\lambda(\beta-q))+\mathbb{E}\Omega^2(n\mathbb{E}vf-\gamma (\beta-q))+\mathbb{E}\phi \Omega).
\end{multline}

Further, 

\begin{align}
\sqrt{n}\begin{pmatrix}
\hat{\beta}_{nAIPW}\\
\hat{\gamma}\\
\hat{\Omega}
\end{pmatrix} &\sim  MVN
\begin{pmatrix}
\theta \ , \
\hat{\mathbf{I}}^{-1}(\hat{\theta})\hat{\mathbf{B}}(\hat{\theta})\hat{\mathbf{I}}^{-1}(\hat{\theta})^T
\end{pmatrix}
\end{align}
where we replace $\mathbb{E}$ with sample averages in expression \eqref{I}-\eqref{B} and $\theta$ with their corresponding solutions to equations \eqref{esteqs}. Following this recipe, we get
\begin{equation}
    \hat{\sigma}^2_{nAIPW} = \frac{1}{n} (\frac{n^2}{(\gamma \lambda)^2}) \hat{\mathbb{E}} \phi^2 + \hat{\epsilon} \approx \\
    \sum_{i=1}^n\Big(\frac{v_if_i}{\hat{\gamma}} - \frac{u_ih_i}{\hat{\lambda}} + \frac{1}{n}{q_i} - \hat{\beta}_{nAIPW})\Big)^2,
\end{equation}
which is the same as \eqref{nAIPWvar1}.

\end{proof}

\bibliography{ref}

\end{document}